
\documentclass[11pt]{article}
\usepackage[a4paper,left=25mm,right=25mm,top=25mm, bottom=25mm]{geometry}
\usepackage[usenames]{color}
\usepackage{a4wide}
\usepackage{iftex}
\usepackage{amsmath}
\usepackage{amssymb}
\usepackage{tikz}
\usepackage{bm}

\usepackage{natbib}
\usepackage{hyperref}

\usepackage{epsfig}
\usepackage{amsfonts}
\usepackage{color}

\newtheorem{theorem}{Theorem}
\newtheorem{definition}{Definition}
\newtheorem{proposition}{Proposition}

\newtheorem{corollary}{Corollary}

\newtheorem{example}{Example}
\newenvironment{proof}{\par\noindent\underline{Proof}: }{\hfill$\square$\par\vskip10pt}

\newcommand{\med}{\mathit med}
\linespread {1.3}
\renewcommand{\baselinestretch}{1.3}\normalsize
\begin{document}

\title{Anonymity and strategy-proofness on a domain of single-peaked and single-dipped preferences}


\author{Oihane Gallo\thanks{\footnotesize{Department of Mathematical Economics, Finance, and Actuarial Sciences, University of Barcelona, Spain.   \texttt{Email:\,oihane.gallo@ub.edu}. Financial support from the Swiss National Science Foundation (SNSF) through project 100018$\_$192583 and from the Spanish Ministry of Economy and Competitiveness, through project PID2021-127119NB-I00 (funded by MCIN/AEI/
10.13039/501100011033 and by “ERDF A way of making Europe”) is gratefully acknowledged. I also thank Bettina Klaus and William Thomson for helpful comments and suggestions.}}}
\thispagestyle{empty}
\date{\today}
\maketitle
\begin{abstract}\vspace{0.1cm}
\noindent We analyze the problem of locating a public facility on a line in a society where agents have  single-peaked or single-dipped preferences. We consider the domain analyzed in \cite{alcalde2024strategy}, where the type of preference of each agent is public information, but the location of her peak/dip as well as the rest of the preference are unknown. We characterize all strategy-proof and type-anonymous rules on this domain. Building on existing results, we provide a two-step characterization: first, the median between the peaks and a collection of fixed locations is computed \cite[]{moulin1980strategy}, resulting in either a single alternative or a pair of contiguous alternatives. If the outcome of the median is a pair, we apply a ``double-quota majority method" in the second step to choose between the alternatives in the pair \cite[]{moulin1983strategy}. We also show the additional restrictions implied by type-anonymity on the strategy-proof rules characterized by \cite{alcalde2024strategy}. Finally, we show the equivalence of the two characterizations. \vspace{0.15cm}
\end{abstract}

\noindent \textit{Keywords:} social choice rule, strategy-proofness, anonymity, single-peaked preferences, single-dipped preferences.\vspace{0.15cm}

\noindent \textit{JEL-Numbers:} D70, D71, D79.

\newpage
\renewcommand{\baselinestretch}{1.5}\normalsize
\section*{Introduction} 

\noindent A new public facility needs to be located in your city, but where? City hall has decided to consider the preferences of the citizens to take the final decision. The aggregation of these preferences is made following two requirements: \emph{strategy-proofness} (no agent has an incentive to misrepresent her preferences) and \emph{anonymity} (all agents have equal power in the decision-making process).\medskip

\noindent It is well-known from the Gibbard-Satterthwaite theorem \cite[]{gibbard1973manipulation,satterthwaite1975strategy} that within the universal domain, where all possible preferences are feasible, no social choice rule with more than two alternatives in its range can simultaneously satisfy two crucial principles: strategy-proofness and non-dictatorship (no single agent can dictate the group's decision).\medskip

\noindent Nevertheless, many social and economic situations naturally lead to restricted preference domains such as determining the location of a public facility. Depending on the nature of the facility to be located, it may be natural for agents to have specific types of preferences. For instance, if the facility has a positive impact on the area - such as a museum that attracts tourists that consume in local shops and restaurants - agents may prefer it to be located closer to their own location. This situation induces single-peaked preferences, where each agent has an ideal point, and the further away the facility is from that point, the worse off the agent is. \cite{black1948decisions, black1948rationale} was the first to discuss single-peaked preferences and demonstrated that the median voter rule, which selects the median of the declared peaks, is strategy-proof and selects the Condorcet winner.\footnote{A Condorcet winner is a candidate who would receive the support of more than half of the electorate in any one-to-one election against each of the other candidates.} Later, \cite{moulin1980strategy} and \cite{barbera1994characterization} characterized all strategy-proof rules within this domain as ``generalized median voter rules." In contrast, if the facility has a negative impact on the neighborhood - such as a nuclear plant that produces radioactive waste and poses health risks - agents may prefer it to be located far away from their own location. This situation induces single-dipped preferences, where each agent has a least preferred point, and the farther away the facility is from that point, the better off the agent is. \cite{barbera2012domains} and \cite{manjunath2014efficient} established that within this domain, all strategy-proof rules have a range of at most two alternatives.\medskip

\noindent Mixed domains of these two types of preferences have been analyzed in the literature. \cite{berga2000maximal} and \cite{achuthankutty2018dictatorship} demonstrated that the Gibbard-Satterthwaite result still holds when the set of admissible preferences for each agent includes all single-peaked and all single-dipped preferences. Therefore, further constraints on this mixed preference domain are necessary to escape the Gibbard-Satterthwaite dictatorship result. For instance, \cite{alcalde2018strategy} characterized all strategy-proof rules in a mixed domain where the peak/dip of each agent is public information, while both the type of preference and the rest of the preference structure (\emph{i.e.} how the other alternatives are ordered in the preference within the corresponding domain) are unknown.\medskip

\noindent In this paper, we are interested in the location of a public facility, such as a train station, a soccer stadium, or a shopping mall, where agents may have one or the other type of preference. On the one hand, a train station can be useful for someone who commutes to work, leading to single-peaked preferences with the peak at her house or workplace. On the other hand, train stations are noisy, and those agents living or working nearby may have single-dipped preferences with the dip at their house or workplace. Then, society is partitioned into agents with single-peaked preferences and agents with single-dipped preferences. Related domains have been previously analyzed. \cite{thomson2022should} examined a restricted domain with two agents, where one has single-peaked preferences and the other has single-dipped preferences, with both the peak and the dip, which are located at the same point, being public information. The author found that the dictatorship result still prevails in this domain. However, other authors have successfully characterized strategy-proof rules with more than two alternatives in their range for other restricted mixed domains. \cite{feigenbaum2015strategyproof} explored a model where the type of preference of each agent (single-peaked or single-dipped) is known, and the preference of each agent is determined cardinally by the distance between her location and her peak/dip, being the peak/dip private information. More recently, \cite{alcalde2024strategy} examined a mixed domain in which the type of preference of each agent (single-peaked or single-dipped) is public information, while the location of the peak/dip and the rest of the preference are private information as in \cite{thomson2022should}. \medskip

\noindent We focus on the domain introduced in \cite{alcalde2024strategy}. 
In addition to the property of strategy-proofness they analyzed, we emphasize the importance of ensuring that all agents have equal influence over the outcome: We seek decision-making procedures that treat all agents equitably, a principle captured by the concept of anonymity. Anonymity requires that the social choice function returns the same outcome if the preferences of the agents are permuted arbitrarily. However, since agents in our domain have different types of preferences, the classical definition of anonymity cannot be directly applied. We introduce an alternative property, \emph{type-anonymity}, which applies anonymity within types, allowing permutations only among agents with the same type of preferences. We then charaterize all strategy-proof and type-anonymous rules.\medskip

\noindent The first characterization is based on well-known results in the literature (Theorem~\ref{theorem1}): any strategy-proof and type-anonymous rule can be described by means of a two-step procedure as follows: in the first step, the median between the peaks and a fixed collection of locations, which can be single alternatives or pairs of contiguous alternatives, if there exist, is computed. If the median is a single alternative, then it becomes the final outcome. Otherwise, in the second step, a ``double-quota majority method" is applied to choose between the two alternatives of the pair. It is noteworthy that the first step exclusively involves agents with single-peaked preferences. \cite{moulin1980strategy} characterized all strategy-proof and anonymous rules on the single-peaked domain. He worked on the extended real line and stated that the outcome of any strategy-proof and anonymous rule coincides with the median between the agents' peaks and a fixed collection of locations (real values). Even though we cannot directly apply this result as the outcome of the first step may be a pair of alternatives, we find a similar result by imposing certain conditions on the feasible fixed collection of locations. In the second step, we face a binary choice problem. \cite{moulin1983strategy} characterized strategy-proof and anonymous rules for choosing between two alternatives in the case of strict preferences as ``quota majority methods". Given two alternatives $a$ and $b$, a quota majority method is defined by a threshold (quota), which is the minimum support required to implement alternative $a$. If the number of agents who prefer $a$ to $b$ is higher than or equal to that threshold, $a$ is selected; otherwise, the rule selects $b$. In our context, some agents have single-peaked preferences, while others have single-dipped preferences. Therefore, in the second step, two thresholds are defined: one for the agents with single-peaked preferences and another one for the agents with single-dipped preferences. Hence, the left alternative of the pair is chosen if both the number of agents with single-peaked preferences and the number of agents with single-dipped preferences are higher than or equal to their corresponding threshold. Otherwise, the right alternative is chosen.  \medskip

\noindent Regarding the second characterization (Theorem~\ref{theorem2}), building upon the main result in \cite{alcalde2024strategy}, we investigate the subfamily of strategy-proof rules that also satisfy type-anonymity. Our aim is to identify the additional restrictions this new property implies on the previously characterized class of strategy-proof rules. We demonstrate that these requirements are based on the number of agents who agree on the decision at each step, rather than on the identities of these agents. Finally, we show that both characterizations (Theorems~\ref{theorem1} and \ref{theorem2}) are equivalent.\medskip

\noindent The remainder of the paper is organized as follows. The model and notation are introduced in Section~\ref{sec1}. In Section~\ref{sec2}, we provide an overview of existing relevant literature. Sections~\ref{sec3} and \ref{sec4} introduce the two different characterizations of the strategy-proof and type-anonymous rules. In Section~\ref{sec5}, we show the equivalence between both characterizations. Finally, Section~\ref{sec6} concludes.

\section{The basic model and definitions}\label{sec1}

\noindent Let us consider a finite set of agents $N=\{1,\ldots,n\}$ and a set of feasible alternatives $X \subseteq \mathbb{R}$. The set of agents $N$ is divided into two sets $A$ and $D=N\setminus A$. Let $|A|=a \in \mathbb{N}\cup\{0\}$ and $|D|=n-a \in \mathbb{N}\cup\{0\}$. Each agent $i\in N$ has a preference relation $R_i$ over the alternatives in $X$. Formally, $R_i$ is a complete, transitive, and antisymmetric binary relation. Let $P_i$ denote the strict preference relation induced by $R_i$. Let ${\cal R}_{i}$ be the set of admissible preferences of agent $i$, and ${\cal R} = \times_{i \in N} {\cal R}_{i}$ the domain of preferences.\medskip

\noindent A preference $R_i$ is \textbf{single-peaked} if there exists $\rho(R_i) \in X$, called agent $i$'s \emph{peak}, such that for each $x, y \in X$, if $\rho(R_i) \geq x > y$ or $\rho(R_i) \leq x < y$, then $x \, P_i \, y$. Similarly, a preference $R_i$ is \textbf{single-dipped} if there exists $\delta(R_i) \in X$, called agent $i$'s \emph{dip}, such that for each $x, y \in X$, if $\delta(R_i) \geq x > y$ or $\delta(R_i) \leq x < y$, then $y \, P_i \, x$. The set of all single-peaked preferences is denoted by $\mathcal{R}_{sp}$ and the set of all single-dipped preferences is denoted by $\mathcal{R}_{sd}$. For each $i\in A$, ${R}_i\in\mathcal{R}_{sp}$ and for each agent $i\in D$, $ R_i\in\mathcal{R}_{sd}$. \medskip

\noindent A preference profile is a list of preferences $R \equiv (R_i)_{i \in N} \in {\cal R}$. For each $S \subset N$, let ${\cal R}^S = ({\cal R}^i)_{i \in S}$ be the subdomain of ${\cal R}$ restricted to agent set $S$. Given profile $R \in {\cal R}$, subprofiles $R_S \in {\cal R}^S$ and $R_{-S} \in {\cal R}^{N \setminus S}$ are obtained by restricting $R$ to $S$ and $N\setminus S$, respectively.\medskip

\noindent A social choice rule, or simply a \textbf{rule}, on ${\cal R}$ is a function $f: {\cal R} \rightarrow X$. Let $\Omega_f$ denote the range of $f$, i.e., the set of alternatives that appear as the outcome of $f$ for some profiles. Formally, $\Omega_f\equiv\{x\in X: \exists R\in\mathcal{R} \mbox{\ such\ that\ } f(R)=x\}$. We assume that $\Omega_f$ is such that its complementary set $\mathbb{R}\setminus\Omega_f$ is either empty or the union of open sets.\footnote{This is assumed to be able to define the concepts of $\Omega_f$-restricted peaks and dips in the following sentences.} For each $i \in A$ and each $R_i \in {\cal R}^i$, let $p_{\Omega_f}(R_i)$ denote the agent $i$'s most preferred alternative in $\Omega_f$, called \textbf{$\Omega_f$-restricted peak}, at $R_i$, i.e., $p_{\Omega_f}(R_i) \equiv \{x \in \Omega_f \, : \, x \, P_i \, y \mbox{ for each } y \in \Omega_f \setminus \{x\}\}$.  Similarly, for each $i \in D$ and each $R_i \in {\cal R}^i$, let $d_{\Omega_f}(R_i)$ denote the agent $i$'s least preferred alternative in $\Omega_f$, called \textbf{$\Omega_f$-restricted dip}, at $R_i$, i.e., $d_{\Omega_f}(R_i) \equiv \{x \in \Omega_f \, : \, y \, P_i \, x \mbox{ for each } y \in \Omega_f \setminus \{x\}\}$. For simplicity, and since it does not affect the analysis, we simply denote the $\Omega_f$-restricted peaks and dips as $p(R_i)$ and $d(R_i)$, respectively, throughout the paper. That is, $p(R_i)=p_{\Omega_f}(R_i)$ and $d(R_i)=d_{\Omega_f}(R_i)$. Let $\Omega_f^a$ denote any vector of $a$ components taking values in the range of the rule $f$, i.e., $\Omega_f^a \equiv \{(x^1,\ldots,x^{a}) \in X^a \, : \, x^1,\ldots, x^a \in \Omega_f\}$. Given $R\in\mathcal{R}$, we have that $p(R)=p(R_A)\in\Omega_f^{a}$ denotes the vector of $\Omega_f$-restricted peaks of agent set $A$ at $R$, and $d(R)=d(R_D)\in\Omega_f^{n-a}$ denotes the vector of $\Omega_f$-restricted dips of agent set $D$ at $R$. Let $\Omega_f^2$ denote the set of pairs formed by alternatives in the range of $f$, i.e., $\Omega_f^2 \equiv \{(x, y) \in X^2 \, : \, x, y \in \Omega_f \mbox{ and } x \neq y\}$ and $\Omega^{C^{2}}_{f}$ the set of all ordered pairs formed by contiguous alternatives in the range of $f$, i.e., $\Omega^{C^{2}}_{f}\equiv\{(x,y) \in \Omega_f^2 : x < y  \mbox{ and } (x, y) \cap \Omega_f=\emptyset\}$.\footnote{Observe that $\Omega^{C^{2}}_{f}=\emptyset$ if either $(i)$ $|\Omega_f|=1$ and, thus, $f$ is constant, or $(ii)$ $\Omega_f$ is a closed interval and, thus,
there are no contiguous alternatives.}


\noindent We now introduce the properties imposed on the social choice rule $f$. The first two properties incentivize truthful revelation of preferences. Specifically, the first property requires that no agent ever benefits by
misrepresenting her preferences.\smallskip

\noindent \textbf{Strategy-Proofness:} For each $R\in {\cal R}$, each $i\in N$, and each $R'_i\in{\cal R}_{i}$, $ f(R)\, R_i \, f(R'_i, R_{-i})$. Otherwise, $f$ is said to be \textbf{manipulable} by agent $i$ at $R$ via $R'_i$.\medskip

\noindent Our next property states that no group of agents ever benefit by jointly misrepresenting their preferences.\smallskip

\noindent \textbf{Group Strategy-Proofness:} 

For each $R\in {\cal R}$, each $S\subseteq N$, and each $R'_S\in{\cal R}^{S}$, there is $i\in S$ such that $f(R)\, R_i \, f(R'_S,R_{-S})$.\medskip


\noindent Since we are interested in fairly considering the preferences of all agents, we also need to introduce the property of anonymity: A social choice rule $f$ is \textbf{anonymous} if for each $R\in {\cal R}$, and each permutation of agents $\sigma:N\longrightarrow N$, it follows that $f(R)=f(R_{\sigma}),$ where $R_{\sigma}$ is the profile in which the preference of each agent $i\in N$ is $R_{\sigma(i)}$. In our domain, society is partitioned into two groups, so the previous definition of anonymity cannot be directly applied. We therefore introduce an alternative anonymity property where only permutations among agents of the same type of preferences are allowed.  \smallskip

\noindent \textbf{Type-Anonymity:} For each $R\in {\cal R}$, and each permutation of agents $\sigma:N\longrightarrow N$ such that $\sigma(i) \in A $ if and only if $i \in A$, it follows that $f(R)=f(R_{\sigma}).$ \medskip


\section{Preliminaries}\label{sec2}

This section presents existing characterizations of the classes of strategy-proof and anonymous rules in both the single-peaked and the single-dipped preference domains separately, as well as of the classes of strategy-proof rules in the mixed domain considered in this paper.

\subsection{Strategy-proofness and anonymity for the single-peaked preference domain}\label{subsec21}

\noindent Strategy-proof and anonymous rules in the single-peaked preference domain were characterized by \cite{moulin1980strategy}: a rule is strategy-proof and anonymous if and only if, for any given profile, the outcome of the rule corresponds to the median between the agents' peaks and $a+1$ fixed real values.\medskip

\noindent Informally, a median of an ordered set is any value of that set such that at least half of the set is less than or equal to the proposed median and at least half is greater than or equal to it. The median value is unique when the set contains an odd number of elements. Hence, given an odd number of values $b_1,b_2,\ldots,b_{k}\in\mathbb{R}_{+}$, for some $k'\in\{1,\ldots,k\}$, $\operatorname{median}\{b_1,b_2,\ldots,b_{k}\}=b_{k'}$ if and only if [$|\{b_i, i\in\{1,\ldots,k\} : b_i\leq b_{k'}\}|\geq\frac{k+1}{2}$ and $|\{b_j, j\in\{1,\ldots,k\} : b_j\geq b_{k'}\}|\geq\frac{k+1}{2}$].\medskip

\noindent The following proposition introduces the result in \cite{moulin1980strategy}.

\begin{proposition}[Proposition 2 in \cite{moulin1980strategy}]\label{moulin}
A rule $f: {\cal R} \rightarrow X$ is strategy-proof and anonymous if and only if there exist $(a+1)$ real numbers $\gamma_1,\ldots,\gamma_{a+1}\in\mathbb{R}_{+}\cup\{+\infty,-\infty\}$ such that for each $R_A\in{\cal R}^{A}$,$$f(R_A)= \operatorname{median}\{(p(R_1),\ldots,p(R_{a}), \gamma_1,\ldots,\gamma_{a+1}\}.$$
\end{proposition}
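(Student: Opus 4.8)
The plan is to prove Moulin's characterization in two directions, treating the single-peaked agents in $A$ as the only relevant players (since the statement quantifies only over $R_A \in \mathcal{R}^A$). The easy direction is to verify that any rule of the stated median form is strategy-proof and anonymous. Anonymity is immediate because $\med$ is a symmetric function of its arguments, so permuting the peaks $p(R_1),\ldots,p(R_a)$ (equivalently, permuting agents within $A$) leaves the value unchanged, while the phantoms $\gamma_1,\ldots,\gamma_{a+1}$ are fixed. For strategy-proofness I would argue that the median is monotone and ``tops-only'' in each coordinate: if agent $i$ moves her reported peak, the median can only move in the same direction as that peak, and single-peakedness guarantees that moving the outcome away from $\rho(R_i)$ makes agent $i$ worse off. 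Hence no unilateral misreport can pull the outcome strictly closer to her true peak, which is exactly strategy-proofness.

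The substantive direction is the converse: every strategy-proof and anonymous $f$ must have this form. First I would use tops-onlyness, a standard consequence of strategy-proofness on the single-peaked domain, to conclude that $f(R_A)$ depends only on the profile of $\Omega$-restricted peaks $p(R_A)$, so we may regard $f$ as a function of $(p_1,\ldots,p_a)$ alone. The key step is then to recover the phantoms: for each $k \in \{0,1,\ldots,a\}$ I would define $\gamma$-type constants by evaluating $f$ at degenerate profiles where $k$ of the agents report a peak at the left extreme of the range and the remaining $a-k$ report at the right extreme. Monotonicity of $f$ (another consequence of strategy-proofness) ensures these boundary values are ordered and serve as the candidate phantoms $\gamma_1 \le \cdots \le \gamma_{a+1}$. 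The heart of the argument is to show that for an arbitrary profile, $f(p_1,\ldots,p_a)$ coincides with $\med(p_1,\ldots,p_a,\gamma_1,\ldots,\gamma_{a+1})$; this follows by repeatedly applying strategy-proofness to move agents one at a time between their true peaks and the extremes, using the fact that each such move shifts the outcome monotonically and cannot cross the agent's peak.

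I expect the main obstacle to be the adaptation of Moulin's original result, which was stated on the full real line with phantoms ranging over $\mathbb{R} \cup \{\pm\infty\}$, to our setting where the peaks are $\Omega$-\emph{restricted} peaks and the outcome must lie in a possibly non-convex range $\Omega$. The assumption that $\mathbb{R} \setminus \Omega$ is empty or a union of open sets is precisely what lets the $\Omega$-restricted peak $p(R_i)$ be well-defined and lets the median computation stay inside $\Omega$; verifying that the median of the peaks and phantoms always lands in $\Omega$ (rather than in a ``gap'') is the delicate point. Concretely, I would check that because every $p(R_i) \in \Omega$ and each phantom can be taken in $\Omega \cup \{\pm\infty\}$, the median—being one of its arguments or squeezed between two of them—cannot fall strictly inside an open gap of $\mathbb{R}\setminus\Omega$, so the formula returns a legitimate element of $\Omega$. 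Once this containment is secured, the remainder reduces to invoking Moulin's argument verbatim on the induced single-peaked problem over the closure of $\Omega$.
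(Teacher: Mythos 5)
The paper contains no proof of this proposition: it is imported verbatim as Proposition~2 of \cite{moulin1980strategy}, so your attempt has to be judged as a reconstruction of Moulin's own argument, and as such it is essentially sound and follows the standard route. Sufficiency via symmetry of the median (anonymity) plus its coordinatewise monotonicity (a unilateral misreport can only leave the median fixed or push it weakly away from the agent's true peak, which single-peakedness rules out as profitable), and necessity via tops-onlyness, extraction of the $a+1$ phantoms from the extreme profiles in which $k$ agents report at one end and $a-k$ at the other, followed by a one-agent-at-a-time replacement argument using monotonicity/uncompromisingness, is exactly the classical proof; the one thing worth making explicit is that anonymity is what guarantees the extreme-profile values depend only on the count $k$ and not on which agents report where, which you use implicitly. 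Your third paragraph, however, tackles a difficulty that lies outside this statement and slightly misjudges how the paper handles it: the proposition is stated, as in Moulin, on the extended real line with real-valued phantoms, and the adaptation to a range $\Omega$ with gaps is deliberately deferred by the paper to Definition~\ref{mixedmedian} and Theorem~\ref{theorem1}, where phantoms may take values in $\Omega\cup\Omega^{2}_{C}$ --- i.e., may be \emph{pairs} of contiguous alternatives --- precisely because taking phantoms in $\Omega\cup\{\pm\infty\}$ and invoking Moulin ``verbatim on the closure of $\Omega$'' does not suffice once single-dipped agents are present: the first step must then be able to output a pair, which no real-valued phantom can encode, and the paper itself warns that Moulin's result ``cannot be directly applied.'' Your containment observation (the median of $2a+1$ values is one of its arguments, hence cannot fall strictly inside an open gap of $\mathbb{R}\setminus\Omega$ when all arguments lie in $\Omega$) is correct, and it matches the paper's remark that when $D=\emptyset$ the phantoms take values in $\Omega$ only and the median alone determines the outcome; but as a patch for the general mixed domain it is the wrong fix, so confine that paragraph's claim to the pure single-peaked case.
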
 

%
%
%

\subsection{Strategy-proofness and anonymity for the single-dipped preference domain}\label{subsec22}

\noindent \cite{manjunath2014efficient} showed that any strategy-proof rule on the single-dipped preference domain has a range of at most two alternatives (see his Lemma 7). The case of two alternatives (or the binary choice problem) has been previously studied with respect to, among other properties, the strategy-proof and anonymous rules.\medskip

\noindent For the binary choice problem, \cite{moulin1983strategy} characterized the strategy-proof and anonymous rules as ``quota majority methods". Let $X=\{x,y\}$, a profile $P\in{\cal R}$, and a value $k\in\mathbb{N}$. The corresponding rule $S_{k}(P)$ is a \textit{quota majority rule} if $S_{k}(P)$ selects $x$ if the number of votes for $x$ is at least $k$, and $y$ otherwise (it is $y$ if the number of votes for $y$ is at least $n +1- k$).\medskip

\noindent The next proposition introduces the result in \cite{moulin1983strategy}.\medskip


\begin{proposition}[Corollary page 63 in \cite{moulin1983strategy}]\label{quota}
Given $X=\{x,y\}$ and a profile $P\in{\cal R}$, a rule $f: {\cal R} \rightarrow X$ is strategy-proof and anonymous if and only if it is a quota majority method.
\end{proposition}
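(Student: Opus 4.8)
The plan is to prove Proposition~\ref{quota}, the characterization of strategy-proof and anonymous rules on the binary choice problem $X=\{x,y\}$ as precisely the quota majority methods. Since this is stated as a corollary to \cite{moulin1983strategy}, the intended proof is a short direct verification in both directions. First I would address the \emph{only if} direction: I show that every quota majority method $S_k$ is both strategy-proof and anonymous. Anonymity is immediate because $S_k$ depends only on the \emph{number} of votes for $x$ (equivalently for $y$), and this count is invariant under any permutation of agents. For strategy-proofness, I would argue that since preferences are strict over the two-element set $X$, each agent's report is simply a vote for her top alternative; an agent who prefers $x$ can only ever help $x$ by voting for $x$, and since the threshold rule is monotone in the vote count (adding a vote for $x$ never switches the outcome from $x$ to $y$), no agent can obtain a more preferred outcome by misreporting. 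Formally, if the sincere outcome is already the agent's preferred alternative she has no incentive to deviate, and if it is her less preferred alternative then switching her vote cannot move the count across the threshold in the favorable direction.

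Next I would establish the \emph{if} direction: every strategy-proof and anonymous rule $f$ on $X=\{x,y\}$ is a quota majority method for some threshold $k$. The key structural fact I would extract is \textbf{monotonicity}: if $f$ selects $x$ at some profile, then $f$ still selects $x$ at any profile obtained by changing some agents' preferences toward $x$ (i.e.\ by having more agents rank $x$ above $y$). This follows from strategy-proofness by the standard one-agent-at-a-time argument: if changing a single agent from a $y$-top preference to an $x$-top preference flipped the outcome from $x$ to $y$, that agent (in the profile where she truly prefers $x$) could manipulate by reporting the $y$-top preference to secure $x$, contradicting strategy-proofness. Combined with anonymity, which reduces every profile to its vote count $m \in \{0,1,\dots,n\}$ of agents preferring $x$, monotonicity says the outcome depends only on $m$ and is weakly increasing toward $x$ as $m$ grows. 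Hence there is a single threshold $k$ such that $f$ selects $x$ exactly when $m \geq k$, which is precisely the definition of $S_k$; the boundary case where $x$ is never (resp.\ always) chosen corresponds to setting $k = n+1$ (resp.\ $k=0$).

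I expect the main obstacle to be purely expository rather than mathematical: one must be careful in reconciling the anonymous reduction (the outcome is a function of the count $m$ alone) with the monotonicity direction, making sure the threshold is well-defined and that the stated equivalence ``$x$ chosen iff at least $k$ votes for $x$, else $y$ (iff at least $n+1-k$ votes for $y$)'' is internally consistent. The arithmetic identity $m \geq k \iff (n-m) \leq n-k \iff (n-m) < n+1-k$ confirms the two formulations of the quota agree. Since this proposition is quoted essentially verbatim from \cite{moulin1983strategy}, I would keep the argument brief, citing the original for the full details and presenting only the two-direction sketch above, with the monotonicity lemma as the single substantive step.
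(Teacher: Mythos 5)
Your proposal is correct, but note that the paper itself contains no proof of this proposition: it is stated in the Preliminaries (Section~2.2) purely as a cited result, namely the Corollary on page~63 of \cite{moulin1983strategy}, and the paper relies on it as an off-the-shelf building block for the second step of its characterization. Your blind reconstruction is essentially Moulin's original argument, and both directions are sound: anonymity of $S_k$ is immediate from count-invariance; strategy-proofness follows because with strict preferences over two alternatives a report is just a vote and the threshold rule is monotone in the vote count; and for the converse, your one-agent-at-a-time monotonicity lemma (a flip from $x$ to $y$ when an agent switches toward $x$ would let that agent manipulate) combined with the anonymous reduction to the count $m$ yields a well-defined threshold $k = \min\{m : h(m) = x\}$. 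Your arithmetic check that ``$x$ chosen iff $m \geq k$'' is equivalent to ``$y$ chosen iff $n - m \geq n+1-k$'' correctly resolves the internal consistency of the quota formulation, and your treatment of the constant rules via $k = 0$ and $k = n+1$ handles the boundary cases (a point worth keeping, since the paper's definition states $k \in \mathbb{N}$ and is silent on whether the constant-$x$ rule requires $k=0$). In short: there is nothing to compare against within the paper, and your self-contained sketch is a faithful and complete rendering of the classical proof it cites.
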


%
%
%

\subsection{Strategy-proofness for our domain of single-peaked and single-dipped preferences}\label{subsec23}

\noindent \cite{alcalde2024strategy} characterized the class of strategy-proof rules on the domain of single-peaked and single-dipped preferences considered in this paper. We summarize in this subsection their most relevant findings.\medskip

\noindent They found that any strategy-proof rule on this domain can be described by means of a two-step procedure as follows: in the first step, agents with single-peaked preferences report their peaks and, either a single alternative or a pair of contiguous alternatives is selected. If a single alternative is selected, then that is the final outcome. Otherwise, in the second step, each agent reports her preference over the two pre-selected alternatives, and based on that information, one alternative is chosen.\footnote{\cite{alcalde2023structure} showed that a two-step procedure applies to any domain of single-peaked and singe-dipped preferences.}\medskip

\noindent In addition to the previous result, Lemma 2 in \cite{alcalde2024strategy} shows that the outcome of a strategy-proof rule only depends on the preferences over the alternatives in the range $\Omega_f$. Therefore, $\Omega_f$ is predefined by the rule and, w.l.o.g., agents' peaks/dips coincide with their $\Omega_f$-restricted peaks/dips, i.e., for each $R\in{\cal R}$, $\rho(R)=p(R)$ and $\delta(R)=d(R)$. For convenience, we consider this condition throughout the paper. \medskip

\noindent Their result is summarized in the following corollary:
\smallskip

\begin{corollary}
\label{structure0}
Let $f: {\cal R} \rightarrow \Omega_f$ be strategy-proof. Then, there is a function $h : \Omega_f^{a} \rightarrow \Omega_f \cup \Omega^{C^{2}}_{f}$ and a set of binary decision functions $\{g_{h(\emph{\textbf p})} : {\cal R} \rightarrow h(\emph{\textbf p})\}_{h(\emph{\textbf p}) \in \Omega^{C^{2}}	_{f}}$ such that for each $\emph{\textbf p} \in \Omega_f^{a}$ and each $R \in {\cal R}$ with $p(R)= \emph{\textbf p}$, $$f(R) = \left\{
	\begin{array}{ll}
	h({\bf p})  & \mbox{ if } h({\bf p}) \in \Omega_f \\*[5pt]
	
	g_{h({\bf p})}(R) & \mbox{ if } h({\bf p}) \in \Omega^{C^{2}}_{f}.
	\end{array}
	\right.$$
\end{corollary}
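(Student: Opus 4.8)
The statement is a repackaging of the three facts from \cite{alcalde2024strategy} recalled immediately before it, so the plan is to verify that each object named in the claim is well defined and then read off the displayed formula; no genuinely new argument is needed. First I would invoke the two-step characterization of strategy-proof rules: for every $R\in{\cal R}$ the first step returns either a single alternative of $\Omega$ or a pair of alternatives, and this outcome depends only on the reported peaks of the single-peaked agents. By Lemma 2 in \cite{alcalde2024strategy} the rule depends only on the restriction of preferences to $\Omega$, so under the standing identification $\rho(R)=p(R)$ I may regard the first-step outcome as a function of the vector of $\Omega$-restricted peaks $p(R)=\mathbf{p}\in\Omega^a$ alone. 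I then define $\omega(\mathbf{p})$ to be this first-step outcome; the force of the two-step characterization is precisely that $\omega$ is well defined, i.e. independent of which $R$ with $p(R)=\mathbf{p}$ is chosen and, in particular, of the single-dipped agents' preferences.

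Next I would identify the codomain of $\omega$ and handle the single-alternative branch. When the first step returns one alternative, it lies in $\Omega$ and, by the procedure, is already the final outcome, so $f(R)=\omega(\mathbf{p})$ whenever $\omega(\mathbf{p})\in\Omega$. When the first step returns a pair, Proposition 2 in \cite{alcalde2024strategy} forces the two alternatives to be contiguous, so the pair lies in $\Omega^2_C$; the two cases together give $\omega:\Omega^a\to\Omega\cup\Omega^2_C$. For the remaining branch, the second step of the procedure selects one of the two contiguous alternatives using the agents' ordinal preferences between them, and for each realizable pair $\omega(\mathbf{p})\in\Omega^2_C$ I would record this choice as a binary decision function $g_{\omega(\mathbf{p})}:{\cal R}\to\omega(\mathbf{p})$, so that $f(R)=g_{\omega(\mathbf{p})}(R)$ there. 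Assembling the two branches produces the displayed piecewise expression.

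The step that carries the real weight is the well-definedness of $\omega$, namely that the first-step outcome genuinely factors through $\mathbf{p}\in\Omega^a$ and is insensitive both to the dips and to the finer structure of the single-peaked agents' preferences beyond their $\Omega$-restricted peaks. This is exactly what the two-step characterization together with Lemma 2 supply; once it is granted, the contiguity claim from Proposition 2 and the mere collection of the second-step choices into the functions $g_{\omega(\mathbf{p})}$ are bookkeeping.
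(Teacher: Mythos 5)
Your proposal is correct and matches the paper exactly: the paper offers no independent proof of this corollary, presenting it as a summary of the two-step characterization, Lemma 2, and Proposition 2 of \cite{alcalde2024strategy}, which are precisely the three facts you invoke and assemble in the same way. Your identification of the well-definedness of $\omega$ (factoring through $\mathbf{p}\in\Omega^a$) as the only step with real content is also the right reading of why these cited results suffice.
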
\medskip

\noindent Finally, we introduce the order $\leq^*$ on the set $\Omega_f\cup \Omega^{C^{2}}_{f}$ as follows: on the set $\Omega_f$, $\leq^*$ corresponds to the partial ordering $\leq$ defined on $\mathbb{R}$, while each pair in $\Omega^{C^{2}}_{f}$ is ``ordered" in between the single alternatives that form the pair. \noindent Formally, 

\begin{itemize}
\item[(i)] for each $x, y \in \Omega_f$, $x \leq^* y$ if and only if $x \leq y$,
\item[(ii)] for each $(x,y)\in \Omega^{C^{2}}_{f}$, $x <^* (x,y) <^* y$, and
\item[(iii)] for each $(x,y),(z,u)\in \Omega^{C^{2}}_{f}$, $ (x,y) =^*(z,u)$ if and only if $x=^{*}z$ and $y=^{*}u$.
\end{itemize}

\noindent To illustrate the above definition, consider, for instance, $\Omega_f=\{1,2,3\}$. Then, $\Omega^{C^{2}}_{f}=\{(1,2),(2,3)\}$. Since under order $\leq$,                                                                                                                                                                                                                                 $1<2<3$, we have that $1<^{*}2<^{*}3$, and each pair of contiguous alternatives is ordered in between the single alternatives, i.e., $1<^{*}(1,2)<^{*}2$ and $2<^{*}(2,3)<^{*}3$. As a result, the order $\leq^*$ on $\Omega_f\cup \Omega^{C^{2}}_{f}=\{1,(1,2),2, (2,3), 3\}$ is $$1<^{*}(1,2)<^{*}2<^{*}(2,3)<^{*}3.$$

\section{Strategy-proof and type-anonymous rules}\label{sec3}

\noindent In this section, we present a characterization of the strategy-proof and type-anonymous rules in our domain. Since any rule $f$ with $|\Omega_f|=1$ is both strategy-proof and type-anonymous, we focus on rules with $|\Omega_f|>1$ throughout the paper.\medskip

\noindent By \cite{alcalde2024strategy}, and as stated in Subsection~\ref{subsec23} (see Corollary~1), any strategy-proof rule can be decomposed into two steps. Hence, we analyze the structure of each step separately.
\subsection{First step}

\noindent Recall that in the first step, only agents with single-peaked preferences are considered. As aforementioned, the strategy-proof and anonymous rules on the single-peaked preferences domain were characterized by \cite{moulin1980strategy} as the median function between the peaks and a fixed collection of locations (see Subsection~\ref{subsec21}). Unlike that result, in our domain, the outcome of the first step can be not only a single alternative but also a pair of contiguous alternatives, so his result cannot be directly applied. However, we will see that a similar result holds, as the following example illustrates.\medskip

\begin{example}\label{ex1}  Let $A=\{i_1,i_2,i_3\}$ and $X=\mathbb{R}$. Consider a rule $f$ with $\Omega_f=\{1\}\cup[2,3]\cup \{4\}$ and the following collection of $(a+1)$ fixed locations: $(1,2,(3,4),(3,4))$.\medskip

\noindent Let us consider the following profiles:

\begin{itemize}
\item Let $R\in{\cal R}$ be such that $p(R)=(1,2,2)$. Since $\operatorname{median}\{1,2,2,1,2,(3,4),(3,4)\}=\operatorname{median}\{1,1,2,2,2,(3,4),(3,4)\}=2$, we have that $2\in\Omega_f$ and then, $f(R)=2$.
\item Let $R'\in{\cal R}$ be such that $p(R')=(1,4,4)$. Since $\operatorname{median}\{1,4,4,1,2,(3,4),(3,4)\}=\operatorname{median}\{1,1,2,(3,4),(3,4),4,4\}=(3,4)$, we have that $(3,4)\in\Omega^{C^{2}}_f$, and a second step is required.
\end{itemize}
\end{example}

\noindent The previous example shows the structure of the first step of any strategy-proof and type-anonymous rule in our domain. Like \cite{moulin1980strategy}, a median of the peaks of the agents and $(a+1)$ fixed locations is computed. However, the feasible collection of fixed locations can take not only real values but also values in $\Omega^{C^{2}}_f$. These fixed locations play a crucial role in this first step, and we now explain the conditions strategy-proofness and type-anonymity impose on them.\smallskip

\noindent First, note that agents of $A$ can only report single alternatives in $\Omega_f$ as peaks. Then, to select a pair as the outcome of the first step, the fixed locations must include elements in $\Omega_f^{C^{2}}$. Specifically, only those pairs reported as fixed locations can be selected as the outcome of the first step. For instance, in Example~\ref{ex1} where $\Omega^{C^{2}}_f=\{(1,2), (3,4)\}$, the pair $(1,2)$ will never be the outcome of the first step. Additionally, since all alternatives in $\Omega_f$ must be the outcome of $f$ for some profiles, we require that at least one fixed location equals either $\min\Omega_f$ or the pair in $\Omega^{C^{2}}_f$ containing $\min\Omega_f$, denoted by $\bm{\min \Omega^{C^{2}}_f}$. To see this, consider Example~\ref{ex1} with fixed locations $(2,2,(3,4),(3,4))$. It is clear that, for any vector of $\Omega_f$-restricted peaks reported, neither alternative $1$ nor pair $(1,2)$ will be the median. Consequently, alternative $1$ would never be the outcome of the rule, which is not possible given that $1\in \Omega_f$. Similarly, we impose that at least one fixed location equals either $\max\Omega_f$ or the pair in $\Omega^{C^{2}}_f$ containing $\max\Omega_f$, denoted by $\bm{\max \Omega^{C^{2}}_f}$. As a consequence of these last two constraints, any interior alternative of $\Omega_f$, i.e., $\bm{int}(\Omega_f)\equiv\{\alpha\in \Omega_f\setminus\{\min\Omega_f,\max\Omega_f\}\}$, can be chosen as the outcome for the median for some vectors of $\Omega_f$-restricted peaks. \medskip 

%

\noindent Once the collection of fixed locations is defined, we can define the first-step function $\med$ as the median between the $a$ peaks and $(a+1)$ fixed locations. We refer to function $\med$ as a \textbf{mixed median function}.  

\begin{definition}\label{mixedmedian}
\emph{Let a rule} $f: {\cal R} \rightarrow \Omega_f$. \emph{The function} $\med : \Omega_f^a \rightarrow \Omega_f \cup \Omega_f^{C^{2}}$ \emph{is a} mixed median function \emph{if there exist} $(a+1)$ \emph{fixed locations} $\gamma^{1}_{f},\dots,\gamma^{a+1}_{f} \in \Omega_f \cup \Omega_f^{C^{2}}$ with:
\begin{enumerate}
   \item[$(i)$] $\gamma_f^1 \leq^* \dots \leq^* \gamma_f^{a+1}$,
    \item[$(ii)$]  $\gamma_f^1 = \min \Omega_f$ or $\min \Omega_f^{C^2}$,
    \item[$(iii)$] $\gamma_f^{a+1} = \max \Omega_f$ or $\max \Omega_f^{C^2}$,
 
\end{enumerate}

\noindent such that for all $R \in \mathcal{R}$,
\[
    \med(p(R)) = \operatorname{median}\{p(R), \gamma_f^1, \dots, \gamma_f^{a+1}\}.
\]
\end{definition}

\noindent Observe that if the median coincides with an $\Omega_f$-restricted peak, the outcome of $\med$ is a single alternative. A pair appears as the outcome of the first step only when the median coincides with a fixed location that takes a value in $\Omega_f^{C^{2}}$. We denote the collection of fixed locations by $\mathbf{\gamma}_{f}=\{\gamma_{f}^{1},\ldots,\gamma_{f}^{a+1}\}$. Moreover, once we know $\Omega_f$ and $\mathbf{\gamma}$, we can define the range of the mixed median function, i.e., the elements that appear as the outcome of $med$. Formally, $\Omega_{\med}\equiv\{\alpha\in \Omega_f\cap\Omega^{C^{2}}_f:\exists R\in {\cal R}$ such that $\mathrm{med}(p(R))=\alpha\}$. In particular, $\Omega_{\med}=int{(\Omega_f)}\cup\{\gamma_f^1, \ldots,\gamma_f^{a+1}\}$. Going back to Example~\ref{ex1}, we have $\Omega_{\med}=\{1\}\cup[2,3]\cup\{(3,4)\}$. Observe that $4$ is not included in $\Omega_{\med}$ because there is no fixed location at $4$. However, $4$ belongs to pair $(3,4)\in \Omega_{med}$ and it will be the outcome of $f$ for some profile.\medskip

\subsection{Second step}

\noindent When the outcome of the first step is a pair of contiguous alternatives, the final outcome is determined in the second step by choosing one of the preselected alternatives. Recall that \cite{moulin1984generalized} characterized the strategy-proof and anonymous rules as ``quota-majority methods". In our domain, the type of preference of the agents plays a crucial role and it is indeed the main difference with \cite{moulin1984generalized}'s result, as illustrated in the following example. 

\begin{example}[Cont. of Example \ref{ex1}]\label{ex2} Let $D=\{j_1,j_2,j_3\}$, and consider the following quotas for $(3,4)$ where the first number of each pair corresponds to the quota for the agents in $A$ and the second one, for the agents in $D$: $\{q_{(3,4)}\}=\{(0,2),(1,1)\}$.

\noindent Let us consider the following profiles:
\begin{itemize}
%
\item Let $R'_D\in{\cal R}$ be such that $d(R_{D}')=d(R')= (2,2,1)$. Recall that $p(R')=(1,4,4)$ and then, $\med(p(R'))=(3,4)$. Since $|\{i\in A: 3\, P'_i\, 4\}|=1$ and $|\{j\in D: 3\, P'_j\, 4\}|=0$, we have that $(1,0)\not\geqq (0,2)$ and $(1,0)\not\geqq (1,1)$.\footnote{$(x,y)\geqq (z,u)$ if and only if $x \geq z$ and $y\geq u$. Similarly, $(x,y)\equiv (z,u)$ if and only if $x = z$ and $y = u$.} Hence, $f(R')=4$. 

\item Let $\bar{R}'_D\in{\cal R}$ be such that $d(\bar{R}')= (2,2,4)$. Recall that $p(R')=(1,4,4)$ and then, $\med(p(R'_A,\bar{R}'_D))=(3,4)$. Since $|\{i\in A: 3\, P'_i\, 4\}|=1$ and $|\{j\in D: 3\, \bar{P}'_j\, 4\}|=1$, we have that $(1,1)\equiv(1,1)$. Hence, $f(R'_A, \bar{R}'_D)=3$.
\end{itemize}

\end{example}

\noindent This example shows the structure of the second step of any strategy-proof and type-anonymous rule on our domain. Like in \cite{moulin1984generalized}, quotas have to be defined. However, instead of a single quota, we establish a pair of quotas $q=(q^{A},q^{D})$, where $q^{D}$, called $A$-quota, represents the quota for the agents with single-peaked preferences and $q^{D}$, called $D$-quota, represents the quota for the agents with single-dipped preferences. We refer to the pair $q$ as a \textbf{double-quota}. It is relevant to mention that multiple double-quotas may arise for each pair $\Omega_{\med}\cap\Omega_f^{C^{2}}$, and that they may differ from pair to pair. Therefore, for each pair $(x,y)\in \Omega_{\med}\cap\Omega_f^{C^{2}}$, we define a collection of double-quotas $\{q_{(x,y)}=(q^A_{(x,y)},q^D_{(x,y)})\}$.\footnote{If there is no agents with single-peaked preferences, i.e., $A=\emptyset$, then for each $(x,y)\in \Omega_{\med}\cap\Omega_f^{C^{2}}$, we assume $q^A_{(x,y)}=0$ in each double-quota.}\medskip

\noindent As was the case for the fixed collection of locations in the first step, the role of these double-quotas is fundamental. In what follows, we explain the restrictions that strategy-proofness and type-anonymity impose on them. These conditions are formally introduced in Definition \ref{doublequota}. Condition $(i)$ requires that at least one agent with single-dipped preferences agrees on the decision: that is, $q^{D}$ must be strictly positive. This condition is satisfied in Example~\ref{ex2} and the reason behind it is the fact that all agents with single-peaked preferences have already ``voted" in the first step. For convenience, condition $(ii)$ states that only minimal double-quotas are considered.\footnote{Note that this does not mean considering a double-quota formed by the minimal $A$-quota and $D$-quota. Consider, for instance, a pair $(x,y)\in \Omega_{\med}\cup\Omega_{f}^{C^{2}}$ with double-quotas $\{(2,3),(3,1)\}.$ In this case, alternative $x$ is chosen if at least either $2$ agents from $A$ and $3$ agents from $D$ or $3$ agents from $A$ and $1$ agent from $D$ prefer $x$ to $y$. If we set the minimum between each quota, we obtain pair $(2,1)$, so $x$ is selected when at least $2$ agents from $A$ and $2$ agents from $D$ prefer $x$ to $y$, which is not correct.} Finally, $q^{A}$ is determined in the first step. Recall from Example~\ref{ex1} that $\mathbf{\gamma}_f=(1,2,(3,4),(3,4))$. Then, pair $(3,4)$ is the median only if either one agent or no agent reports a peak to the left of $(3,4)$. Hence, the only possible values for $q^{A}_{(3,4)}$ are $0$ and/or $1$. This, together with the minimal assumption in condition $(ii)$, requires the existence of at least a double-quota where the $A$-quota is minimal, i.e., $q^{A}_{(3,4)}=0$ (condition $(iii)$). We call a collection of double-quotas satisfying these conditions a \textbf{set of minimal double-quotas}.\medskip

\noindent Before introducing the formal definition, we present the following notation: Let $\alpha\in \Omega_{\med}$, we denote the number of fixed locations at $\alpha$ by $M^{\alpha}$ , i.e., $M^{\alpha}\equiv|\{ \gamma_{f}^{j},\ j\in\{1,\ldots,a+1\}: \gamma_{f}^{j}=^{*}\alpha\}|$. Similarly, we denote the number of fixed locations to the left of or at $\alpha$ by $\underline{M}^{\alpha}$, i.e., $\underline{M}^{\alpha}\equiv|\{ \gamma_{f}^{j},\ j\in\{1,\ldots,a+1\}: \gamma_{f}^{j}\leq^{*}\alpha\}|$.\medskip

\begin{definition}
\label{doublequota}
 \emph{Let a rule} $f: \mathcal {R}\rightarrow \Omega_f$ \emph{and a mixed median function} $\med$, \emph{with} $\Omega_{\med}$. \emph{For each} $(x,y) \in \Omega_{\med} \cap \Omega^{C^{2}}_{f}$, \emph{a} set of minimal double-quotas $\{q_{(x,y)}\}_{l=1}^{\bar{l}}$ \emph{with} $\bar{l}\leq M^{(x,y)}$ \emph{is such that:}
 \begin{itemize}
 \item[(i)] $q^{A}_{(x,y)}\in\mathbb{N}\cup\{0\}$, and $q^{D}_{(x,y)}\in\mathbb{N}$,
 \item[(ii)] \emph{for each} $l,l'\in\{1,\ldots,\bar{l}\}$ \emph{with} $l\neq l'$, \emph{there are no} $q_{(x,y)}^{l}, q_{(x,y)}^{l'}$ \emph{such that} $q_{(x,y)}^{l}\geqq q_{(x,y)}^{l'}$ \emph{or} $q_{(x,y)}^{l'}\geqq q_{(x,y)}^{l}$, \emph{and}
 
\item[(iii)] \emph{there is} $l\in\{1,\ldots,\bar{l}\}$ \emph{and a double-quota} $q_{(x,y)}^{l}$ \emph{such that} $q^{A,l}_{(x,y)}=(a+1)-\underline{M}^{(x,y)}$.
\end{itemize}
\end{definition}

\noindent Note that, by the definition of median, $(a+1)-\underline{M}^{(x,y)}$ corresponds to the minimal number of $\Omega_f$-restricted peaks to the left of or at $(x,y)$ required to implement $(x,y)$ as the outcome of the median, i.e., the minimal $A$-quota. For instance, in Example~\ref{ex1}, the minimal $A$-quota for $(3,4)$ is $4-4=0$.\medskip


\noindent Given $(x,y)\in \Omega_{\med} \cap \Omega^{C^{2}}_{f}$ and the associated set of minimal double-quotas, a strategy-proof and type-anonymous rule selects $x$ if there exists a double-quota such that both the number of agents in $A$ and the number of agents in $D$ who prefer $x$ to $y$ are equal to or higher than the corresponding $A$-quota and $D$-quota. Otherwise, $y$ is chosen. We refer to this procedure as a \textbf{double-quota majority method}.\medskip

\noindent To introduce the formal definition, we first define the following notation. Let ${\bf p} \in \Omega_f^a$ and $\med(\mathbf{p}) \in \Omega^{C^{2}}_{f}$, we denote by $\underline{\med}(\mathbf{p})$ and $\overline{\med}(\mathbf{p})$ the left and the right alternative of the pair respectively. For each  $R\in\mathcal{R}$, $L^{A}_{(x,y)}(R)$ denotes the set of agents of $A$ that prefer $x$ to $y$ at $R$, i.e., $L^{A}_{(x,y)}(R)\equiv\{i\in A: x\, P_i\, y\}$. Similarly, $L^{D}_{(x,y)}(R)$ denotes the set of agents of $D$ who prefer $x$ to $y$ at $R$, i.e., $L^{D}_{(x,y)}(R)\equiv\{i\in D: x\, P_i\, y\}$.
	
\begin{definition}
\label{doublequotamethod}
 \emph{Let a rule} $f: \mathcal {R}\rightarrow \Omega_f$ \emph{and a mixed median function} $\med$ \emph{with} $\Omega_{\med} \subseteq \Omega_f\cup \Omega^{C^{2}}_{f}$. \emph{For each} ${\bf p} \in \Omega_f^a$ \emph{such that} $\med(\mathbf{p}) \in \Omega^{C^{2}}_{f}$, \emph{the associated binary function} $t_{\med({\bf p})}$ \emph{is a} double-quota majority method \emph{if there exists a set of minimal double-quotas} $\{q_{\med(\mathbf{p})}^{l}\}_{l=1}^{\bar{l}}$ with $\bar{l}\leq M^{\med(\mathbf{p})}$ \emph{such that for each} $R \in {\cal R}$ \emph{with} $p(R) = {\bf p}$,
	$$t_{\med({\bf p})}(R) = \left\{
	\begin{array}{ll}
	\underline{\med}(\mathbf{p}) & \mbox{ \emph{if} }  (|L^{A}_{\med(\mathbf{p})}(R)|, |L^{D}_{\med(\mathbf{p})}(R)|)\geqq q^{l}_{\med(\mathbf{p})} \mbox{ \emph{for some} } l\in \{1,\ldots, \bar{l}\}, \\*[5pt]
	\overline{\med}(\mathbf{p}) & \mbox{ \emph{otherwise}. } 
	\end{array}
	\right. \vspace{0.2cm}$$
\end{definition}

\subsection{The characterization}

\noindent The next theorem characterizes the class of strategy-proof and type-anonymous rules on our domain. It states that a  rule $f$ is strategy-proof and type-anonymous if and only if it can be decomposed into a mixed median function $\med$ and a collection of double-quota majority methods $\{t_{\med(\mathbf{p})}\}_{\med(\mathbf{p})\in \Omega_{\med}\cap\Omega^{C^{2}}_{f}}$.

\begin{theorem}
\label{theorem1}
The following statements are equivalent:
\begin{itemize}
\item[(i)] $f:\mathcal{R}\rightarrow\Omega_f$ is strategy-proof and type-anonymous.
\item[(ii)] $f:\mathcal{R}\rightarrow\Omega_f$ is group strategy-proof and type-anonymous.
\item[(iii)] There is a mixed median function $\med$ and a set of double-quota majority methods $\{t_{\med(\mathbf{p})} : {\cal R} \rightarrow \med(\mathbf{p})\}_{\med(\mathbf{p})\in \Omega_{\med}\cap\Omega^{C^{2}}_{f}}$ such that for each $R\in {\cal R}$ with $p(R)=\mathbf{p}$, $$f(R) = \left\{
	\begin{array}{ll}
	\med({\bf p})  & \mbox{ if } \med({\bf p}) \in \Omega_f \\*[5pt]
	
	t_{\med({\bf p})}(R) & \mbox{ if } \med({\bf p}) \in \Omega^{C^{2}}_{f}.
	\end{array}
	\right.$$
\end{itemize}
\end{theorem}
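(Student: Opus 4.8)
The plan is to establish the cycle $(i) \Rightarrow (iii) \Rightarrow (ii) \Rightarrow (i)$. The implication $(ii) \Rightarrow (i)$ is immediate: applying group strategy-proofness to the singleton coalitions $S=\{i\}$ recovers strategy-proofness, and type-anonymity is common to both statements. The substance lies in $(i) \Rightarrow (iii)$, the necessity of the two-step structure, and in $(iii) \Rightarrow (ii)$, which upgrades the structured rule to group strategy-proofness. Throughout I would exploit Corollary~\ref{structure0}, which already gives, for any strategy-proof $f$, a decomposition into a first-step map $\omega:\Omega^a \rightarrow \Omega\cup\Omega^2_C$ and binary decision functions $g_{\omega(\mathbf{p})}$; the task in $(i)\Rightarrow(iii)$ is then to identify $\omega$ as a mixed median and each $g$ as a double-quota majority method.

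For the first step, I would restrict attention to profiles in which only the peaks of the $A$-agents vary. The key preliminary observation is that, since each pair in $\Omega^2_C$ consists of contiguous alternatives and every $A$-peak lies in $\Omega$, no $A$-peak ever falls strictly between the endpoints of such a pair; consequently each $A$-agent's preference, once lifted to the enriched outcome set $(\Omega\cup\Omega^2_C,\leq^*)$ with a pair sitting between its endpoints, is single-peaked in $\leq^*$. Using this, I would argue that strategy-proofness of the composite $f$, together with type-anonymity restricted to $A$, forces $\omega$ to be strategy-proof and anonymous with respect to these induced single-peaked preferences, so that the argument of Proposition~\ref{moulin} applies on the enriched space and yields $(a+1)$ phantoms $\gamma_1,\dots,\gamma_{a+1}\in\Omega\cup\Omega^2_C$ with $\omega=\med$. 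The boundary conditions $(i)$ and $(ii)$ of Definition~\ref{mixedmedian} would follow from the requirement that $\min\Omega$ and $\max\Omega$ belong to the range of $f$, exactly as in the discussion preceding that definition.

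For the second step, fix a pair $(x,y)=\omega(\mathbf{p})\in\Omega^2_C$ and view $g_{\omega(\mathbf{p})}$ as a binary choice between $x$ and $y$. Type-anonymity yields anonymity separately within $A$ and within $D$, so $g$ depends on preferences only through the counts $(|L^A_{(x,y)}|,|L^D_{(x,y)}|)$; a two-dimensional adaptation of the quota argument of Proposition~\ref{quota} then gives a monotone acceptance region for $x$, which I would encode as a set of pairwise-incomparable (minimal) double-quotas, delivering condition $(ii)$ of Definition~\ref{doublequota}. The conditions that genuinely couple the two steps are $(iii)$, that $q^A_{(x,y)}=(a+1)-\underline{M}^{(x,y)}$, and $(i)$, that $q^D_{(x,y)}\geq 1$, and I expect these to be the crux of the whole proof. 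Condition $(iii)$ should follow by noting that an $A$-agent prefers $x$ to $y$ exactly when her peak lies weakly to the left of $x$, so the number of $A$-agents supporting $x$ is pinned down by the very configuration of peaks that makes $(x,y)$ the median; matching the $A$-quota to this number is then forced by testing an $A$-agent positioned at the boundary of the median, who can trade off triggering the pair against shifting the outcome to an adjacent value. The main difficulty is precisely that this argument cannot be decoupled from the second step: the agent's true preference over the event ``pair $(x,y)$ is selected'' is mediated by $g$, so the single-peaked manipulation arguments of step one and the binary manipulation arguments of step two must be handled jointly. Once $(iii)$ holds, the $A$-part of each double-quota is automatically satisfied whenever the pair is reached, so the binary decision is effectively delegated to the $D$-agents, and strategy-proofness for a $D$-agent then forces $q^D_{(x,y)}\geq 1$.

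Finally, for $(iii)\Rightarrow(ii)$ I would verify directly that the composite rule is group strategy-proof and type-anonymous. Type-anonymity is immediate because both $\med$ and each $t_{\med(\mathbf{p})}$ read the peaks and dips only through counts. For group strategy-proofness I would classify a coalition's deviation according to whether it affects the first step, the second step, or both: deviations confined to the first step are blocked by the group strategy-proofness of the generalized median on single-peaked preferences, deviations confined to the second step are blocked by the monotonicity of the double-quota acceptance region, and mixed deviations are blocked precisely by condition $(iii)$, which aligns the $A$-quota with the median threshold so that an $A$-agent cannot gain by simultaneously moving the selected pair and flipping the binary outcome. Assembling these cases closes the cycle and establishes the equivalence.
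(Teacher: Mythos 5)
Your plan founders at the step you yourself identify as the crux, and flagging it is not the same as closing it. In the $(i)\Rightarrow(iii)$ leg, the appeal to Proposition~\ref{moulin} on the enriched set $(\Omega\cup\Omega^2_C,\leq^*)$ does not go through as stated: Moulin's theorem presupposes that the outcome space and the space of admissible peaks coincide and that the full single-peaked domain over that space is available, whereas here $A$-agents can only ever have peaks in $\Omega$, never at a pair (the paper itself remarks that Moulin's result ``cannot be directly applied''). More fundamentally, $\omega$ is not a social choice function to which strategy-proofness or anonymity can be ascribed in isolation: when $\omega(\mathbf{p})\in\Omega^2_C$ the realized outcome depends on the whole profile through $g_{\omega(\mathbf{p})}$, so an $A$-agent's induced comparison between ``the pair is selected'' and a single alternative is profile-dependent, and any median structure for $\omega$ must be extracted from strategy-proofness of the composite $f$. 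Precisely the coupled claims --- that pairs can arise only at phantom positions, that some double-quota satisfies $q^{A}_{(x,y)}=(a+1)-\underline{M}^{(x,y)}$ (Definition~\ref{doublequota}$(iii)$), that $q^{D}_{(x,y)}\geq 1$, and the bound $\bar{l}\leq M^{(x,y)}$ linking the number of minimal double-quotas to the number of phantoms at the pair --- are disposed of in one-sentence sketches (``testing an $A$-agent positioned at the boundary''). These are the substance of the theorem, not routine verifications, so the proof is incomplete where it matters most.

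The $(iii)\Rightarrow(ii)$ leg has the same character: the classification of coalitional deviations into first-step, second-step, and mixed cases is announced but not executed, and the mixed case --- a coalition whose $A$-members simultaneously shift the median and change the counts $|L^{A}_{(x,y)}|$ while $D$-members change $|L^{D}_{(x,y)}|$ --- is essentially as hard as the theorem itself. Note that the paper never proves group strategy-proofness directly: it obtains $(i)\Leftrightarrow(ii)$ by citing \cite{barbera2010individual} (the domain satisfies indirect sequential inclusion), and it proves Theorem~\ref{theorem1} in Section~\ref{sec5} by an explicit two-way translation between the phantom/double-quota description and the coalition-system description of Theorem~\ref{theorem2}$(iii)$ (phantoms $\leftrightarrow$ type-anonymous left coalition systems via $\underline{M}^{\alpha}$ and minimal coalition sizes, double-quotas $\leftrightarrow$ left-decisive sets via coalition cardinalities), so that all the strategy-proofness machinery is inherited from \cite{alcalde2024strategy} together with Propositions~\ref{generalized} and~\ref{second-step2}. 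A workable repair of your outline along these lines: prove only $(iii)\Rightarrow(i)$, or translate your statement $(iii)$ into Theorem~\ref{theorem2}$(iii)$ as in Section~\ref{sec5} and invoke the sufficiency already established there, and then obtain group strategy-proofness for free from the cited equivalence rather than by direct case analysis.
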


\noindent This result generalizes the results in \cite{moulin1980strategy,moulin1983strategy} for the single-peaked and single-dipped preference domains respectively. If $D=\emptyset$, the fixed locations only take values in $\Omega_f$ and the outcome of the median gives the final outcome of the rule. Otherwise, if $A=\emptyset$, $|\Omega_f|=2$ and a binary decision problem is faced. In addition, since we only have agents of $D$, each $A-$ quota is zero, and the final outcome is chosen by a simple quota majority method.\medskip

\noindent Observe also that strategy-proofness is equivalent to group strategy-proofness in our domain. This follows from Theorem 2 in \cite{barbera2010individual}, which shows that this equivalence exists if the domain satisfies the condition of indirect sequential inclusion. Example $(viii)$ of Section 4.5 in \cite{barbera2010individual} mentions that our domain satisfies indirect sequential inclusion.\medskip

\noindent Finally, the proof of this theorem is done in Section~\ref{sec5}.

\section{An alternative characterization}\label{sec4}

\noindent In this section, we provide an alternative characterization of the strategy-proof and type-anonymous rules based on the family characterized by \cite{alcalde2024strategy}. \medskip

\subsection{First step}

\noindent Recall that the first step consists of taking the median of the peaks and a fixed collection of locations. These fixed locations are exogenous, and once they are defined, we know both the range of the mixed median function and how many agents need to declare a peak at the left of or at a specific alternative to implement that alternative as the median. This can be understood as the ``support" required to implement an alternative. Hence, for each element in $\Omega_{\med}$, we can define a set of coalitions that represent the ``support" required to implement that element. We then define a function $\omega$ that, starting form the left under order $\leq^{*}$, chooses the first element in its range that receives ``enough" support. Let us illustrate the process with the following example:

\begin{example}\label{ex3} Let $A=\{i_1,i_2,i_3\}$, and $X=\mathbb{R}$. Consider a rule $f$ with $\Omega_f=\{1\}\cup[2,3]\cup \{4\}$ and the following fixed collection of locations: $\mathbf{\gamma}_f =(1,2,(3,4),(3,4))$. Hence,
\begin{itemize}
\item $\Omega_{\med}=\{1\}\cup[2,3]\cup \{(3,4)\}$;
\item Alternative $1$ is implemented if $3$ agents report $1$ as the peak.
\item Alternative $x\in[2,3]$ is implemented if $2$ agents report any alternative to the left of $x$ or $x$ itself as the peak.
\item Pair $(3,4)$ is implemented if no agent or $1$ agent reports any alternative to the left of $(3,4)$ as the peak.
\end{itemize}

Thus, we can define for each element in $\Omega_{\med}$, a set of ``winning" coalitions based on the minimal support required to implement the element as follows:

\begin{itemize}
\item $\mathcal{L}(1)=\{i_1,i_2,i_3\}=\{S\subseteq A: |S|=3\}$;
\item For each $x\in[2,3]$, $\mathcal{L}(x)=\{S\subseteq A: |S|\geq 2\}$;
\item $\mathcal{L}(3,4)=2^A$.
\end{itemize}

\noindent Let us consider the following profiles:

\begin{itemize}
\item Let $R\in{\cal R}$ be such that $p(R)=(1,2,2)$. Then, $\{i \in A: p(\Omega_i) \leq^* 2\}=\{i_1,i_2,i_3\} \in {\cal L}(2)$ and $\{i \in A: p(\Omega_i) \leq^* 1\}=\{i_1\} \notin {\cal L}(1)$. Hence, $\omega(p(R))=2\in\Omega_f$ and then, $f(R)=2$.
\item Let $R'\in{\cal R}$ be such that $p(R')=(1,4,4)$. Then, $\{i \in A: p(R'_i) \leq^* (3,4)\}=\{i_1\} \in {\cal L}(3,4)$ and for each $\alpha\leq^{*}(3,4)$, $\{i \in A: p(R'_i) \leq^* \alpha\}=\{i_1\} \notin {\cal L}(\alpha)$. Hence, $\omega(p(R'))=(3,4)\in\Omega^{C^{2}}_f$ and a second step is required.
\end{itemize}
\end{example}

\noindent This example illustrates the structure of any strategy-proof and type-anonymous rule on our domain. There are three main aspects to consider: first, the range of the first step. In Section \ref{sec3}, $\Omega_{\med}$ is derived from $\Omega_f$ and $\mathbf{\gamma}_f$. However, in this second result, we assume that the range of $\omega$, $\Omega_{\omega}$, is predefined by the rule given that, as shown in the example, the winning coalitions are defined only for the alternatives in the range of the first step. The range $\Omega_{\omega}$ must satisfy three conditions: $(i)$ either $\min\Omega_f$ or $\min\Omega^{C^{2}}_f$ belongs to $\Omega_{\omega}$; $(ii)$ either $\max\Omega_f$ or $\max\Omega^{C^{2}}_f$ belongs to $\Omega_{\omega}$; and $(iii)$ any interior alternative of $\Omega_f$ belongs to $\Omega_{\omega}$, i.e., $\mathit{int}(\Omega_f) \subseteq \Omega_{\omega} \subseteq \Omega_f\cup \Omega^{C^{2}}_{f}$. Note that conditions $(i)$ and $(ii)$ are equivalent to conditions $(i)$ and $(ii)$ in Definition~\ref{mixedmedian}, while condition $(iii)$ is equivalent to $\Omega_{\med}=int{(\Omega_f)}\cup\{\gamma_f^1, \ldots,\gamma_f^{a+1}\}$. This leads to Proposition 2 in \cite{alcalde2024strategy}. Formally, \medskip

\begin{proposition}[Proposition 2 in \citet{alcalde2024strategy}]\label{range}

Let $f: {\cal R} \rightarrow \Omega_f$ be strategy-proof. Then, $\Omega_\omega$ is such that 

\begin{itemize}
\item[(i)] if $\min\Omega_f^{C^{2}}\notin \Omega_\omega$, then $\min\Omega_f\in \Omega_\omega$,
\item[(ii)] if $\max\Omega_f^{C^{2}}\notin \Omega_\omega$, then $\max\Omega_f\in \Omega_\omega$, and
\item[(iii)] $\mathit{int}(\Omega_f) \subseteq \Omega_{\omega} \subseteq \Omega_f\cup \Omega^{C^{2}}_{f}$.\footnote{The order of the items here differs from the one in \cite{alcalde2024strategy}.}
\end{itemize}
\end{proposition}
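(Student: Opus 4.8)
The plan is to read all three conditions off the decomposition of Corollary~\ref{structure0}, using only two inputs: that $\Omega$ is \emph{exactly} the range of $f$, and that $f$ is strategy-proof. The upper inclusion $r_\omega\subseteq\Omega\cup\Omega^2_C$ in (iii) is immediate, since $\omega$ takes values in $\Omega\cup\Omega^2_C$ by Corollary~\ref{structure0}. For the two extreme-point conditions I would argue directly. Since $\min\Omega\in\Omega$ and $\Omega$ is the range of $f$, there is a profile $R$ with $f(R)=\min\Omega$. By Corollary~\ref{structure0}, $\omega(p(R))$ is either the single alternative $\min\Omega$, in which case $\min\Omega\in r_\omega$, or a contiguous pair $(x,y)\in\Omega^2_C$ with $\min\Omega\in\{x,y\}$; but $\min\Omega$ cannot be the right coordinate of any pair, because no element of $\Omega$ lies to its left, so the pair must be the unique contiguous pair whose left coordinate is $\min\Omega$, namely $\min\Omega^2_C$, and hence $\min\Omega^2_C\in r_\omega$. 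Thus the $\leq^*$-least element of $r_\omega$ is $\min\Omega$ or $\min\Omega^2_C$, which is the content of (i); the argument for (ii) at the right end is symmetric.

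The remaining and substantive task is the lower inclusion $\mathit{int}(\Omega)\subseteq r_\omega$ in (iii). Here I would first upgrade the weak information ``$f$ attains every interior point'' to a statement about the first step itself. The key step is to show that $\omega$, viewed as a tops-only aggregator of the $A$-peaks into the chain $(\Omega\cup\Omega^2_C,\leq^*)$, is monotone: moving one reported peak weakly to the right under $\leq^*$ moves $\omega(\mathbf{p})$ weakly to the right. This monotonicity is the order-theoretic shadow of strategy-proofness of $f$ and can be extracted from the lemmas preceding this proposition in \cite{alcalde2024strategy}. A monotone tops-only aggregator into a chain admits a generalized-median (min--max) representation with $a+1$ phantoms $\gamma_1\leq^*\cdots\leq^*\gamma_{a+1}$ in $\Omega\cup\Omega^2_C$, exactly as in the single-peaked characterization of \citet{moulin1980strategy} and \citet{barbera1994characterization}, so that $\omega(\mathbf{p})=\med(\mathbf{p},\gamma_1,\dots,\gamma_{a+1})$ and $r_\omega=\mathit{int}(\Omega)\cup\{\gamma_1,\dots,\gamma_{a+1}\}$.

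Once this representation is in place the interior inclusion is immediate. By the extreme-point conditions already established, $\gamma_1\in\{\min\Omega,\min\Omega^2_C\}$ and $\gamma_{a+1}\in\{\max\Omega,\max\Omega^2_C\}$, so for any $\alpha\in\mathit{int}(\Omega)$ we have $\gamma_1\leq^*\alpha\leq^*\gamma_{a+1}$. Evaluating $\omega$ at the unanimous profile $\mathbf{p}=(\alpha,\dots,\alpha)$ then gives $\med(\alpha,\dots,\alpha,\gamma_1,\dots,\gamma_{a+1})=\alpha$, because the $a$ copies of $\alpha$ together with at least one phantom on each side place the $(a+1)$-st order statistic exactly at $\alpha$; hence $\alpha\in r_\omega$, completing (iii).

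The main obstacle is the middle paragraph: establishing the min--max representation of the \emph{pair-valued} first step from strategy-proofness of $f$. The difficulty is that $\omega$ is not literally a social choice function over a set of alternatives over which the $A$-agents have single-peaked preferences: its outputs include contiguous pairs $(x,y)$ that are only later resolved by the binary stage, and the $D$-agents, who are outside the first step, can select either coordinate of such a pair. One must therefore show that monotonicity of $\omega$ in the $\leq^*$ order genuinely follows from strategy-proofness of the \emph{composite} rule $f$ (so that a single-peaked agent can never gain by pushing $\omega$ across a pair), and that this monotonicity suffices to port the Moulin--Barber\`a median representation to the chain $(\Omega\cup\Omega^2_C,\leq^*)$. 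Everything else — the two extreme-point conditions and the interior inclusion — then follows by the short arguments above.
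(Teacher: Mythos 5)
Your bookends are fine, but the core of the proposal fails at exactly the point you yourself flag as the main obstacle, and there is also a mismatch between what you prove and what the proposition literally says. On the second point first: your extreme-point argument correctly establishes the disjunction ``$\min\Omega\in r_\omega$ or $\min\Omega^2_C\in r_\omega$'' (every value of $f$ is reached through $\omega$ by Corollary~\ref{structure0}, and $\min\Omega$ can only be the left coordinate of a contiguous pair), and this disjunction is indeed what the surrounding prose and Definition~\ref{mixedmedian}$(i)$--$(ii)$ use. But item $(i)$ of the proposition as stated is the conditional ``$\min\Omega^2_C\in r_\omega\Rightarrow\min\Omega\in r_\omega$'', which is a different statement: neither implies the other. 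In fact, for $\Omega=\{0,1\}$ the strategy-proof rule that selects $0$ if and only if every agent in $A$ and at least one agent in $D$ prefers $0$ to $1$ has $r_\omega=\{(0,1),1\}$, which satisfies your disjunction while violating the literal conditional; so the two readings genuinely come apart, and asserting that your disjunction ``is the content of $(i)$'' papers over a discrepancy you should have flagged (the draft itself is internally inconsistent here, and your version matches the use made of the proposition, but that needs saying, not assuming).

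The substantive gap is the middle paragraph. The proposition assumes only strategy-proofness, yet you derive $\mathit{int}(\Omega)\subseteq r_\omega$ from an $(a+1)$-phantom median representation of $\omega$. That representation is precisely the \emph{type-anonymous} characterization (Theorem~\ref{theorem1}); under strategy-proofness alone the first step is a generalized median voter function given by a left coalition system with $2^A$-many parameters (Definition~\ref{leftsystem} without condition $(v)$), not $a+1$ anonymous phantoms, so the representation you invoke is simply unavailable under the stated hypothesis. Moreover, your sole justification for it --- monotonicity of $\omega$ in $\leq^*$ --- is strictly weaker than what any min--max representation requires: monotone, tops-only aggregators into a chain need not be uncompromising (averaging-type rules are monotone but admit no median representation), so ``monotone $\Rightarrow$ min--max with $a+1$ phantoms'' is false even ignoring anonymity. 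Finally, deferring the needed lemmas to \cite{alcalde2024strategy} makes the argument circular relative to the actual logical order: in that paper the range conditions of this proposition are established \emph{before} and then \emph{used by} the representation results, and in the present paper too, Proposition~\ref{generalized} and Theorem~\ref{theorem2} presuppose them. (Note the paper offers no proof to compare against --- it imports the result from \cite{alcalde2023strategy} with a one-line intuition.) What your proof never supplies is the direct strategy-proofness argument that the interior inclusion actually needs: if some interior $\alpha\in\Omega$ were attained by $f$ only through pairs $(x,\alpha)$ or $(\alpha,y)$, one exhibits a profile (for instance, all $A$-peaks at $\alpha$) at which some agent profitably deviates.
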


\noindent Note that the extreme alternatives in $\Omega_f$, as well as some pairs of contiguous alternatives, may or may not be included in the range of $\omega$.\smallskip

\noindent Once $\Omega_{\omega}$ is defined, for each element $\alpha\in \Omega_{\omega}$, we construct a set of `winning" coalitions $\mathcal{L}(\alpha)$. These coalitions represent the minimal support an alternative needs to be implemented and must satisfy the following conditions, which will be formally introduced in Definition~\ref{leftsystem}. Condition $(i)$ states that if a coalition is ``strong enough” to support an alternative, so are its supercoalitions. Observe that this is satisfied in the previous example. Condition $(ii)$ states that if a coalition is ``strong enough” to support an alternative, it is also ``strong enough” to support any higher alternative under order $\leq^{*}$. Note that in Example~\ref{ex3}, ${\cal L}(1)\subset {\cal L}(x)\subset {\cal L}(3,4)$ for each $x\in[2,3]$. The next two conditions refer to the role of the emptyset. By condition $(i)$, if the emptyset is ``strong enough" to implement an element in $\Omega_{\omega}$, then any coalition is ``strong enough" to implement that element. Consequently, no element to its right will be implemented. To see this, consider Example~\ref{ex3} with ${\cal L}(2)=2^{A}$ and a profile $R\in{\cal R}$ such that $p(R)=(3,3,3)$. In this case, alternative $2$ is chosen in the first step since $2$ is the first element with ``enough support" given that $\{i \in A: p(\Omega_i) \leq^* 2\}=\{\emptyset\}\in {\cal L}(2)$ and $\{i \in A: p(\Omega_i) \leq^* 1\}=\{\emptyset\}\notin {\cal L}(1)$. Since ${\cal L}(2)=2^{A}$, any coalition is ``strong enough" to support any element to the left of $2$ under order $\leq^{*}$. Therefore, for any profile $R\in{\cal R}$, $\omega(p(R))\leq^{*} 2$, which is not possible given that $(2,3]\cup\{(3,4)\}\in \Omega_{\omega}$. As a result, two main conditions emerge: if $\max\Omega_f$ does not exist, the emptyset cannot be ``strong enough" to implement any element in $\Omega_{\omega}$ (condition $(iii)$); otherwise, the emptyset could only belong to ${\cal L}(\max \Omega_{\omega})$. Indeed, condition $(iv)$ states that when $\max\Omega_f$ exists but $\max\Omega_f\notin \Omega_{\omega}$, the emptyset belongs to ${\cal L}(\max \Omega_{\omega})$. Going back to Example~\ref{ex3}, we have $\max\Omega_f=4\notin \Omega_{\omega}$, and $\max \Omega_{\omega}=(3,4)$. Suppose that ${\cal L}(3,4)=2^{A}\setminus\{\emptyset\}$ and consider $R\in{\cal R}$ such that $p(R)=(4,4,4)$. Then, for any $\alpha\in \Omega_{\omega}$, $\{i \in A: p(\Omega_i) \leq^* \alpha\}=\{\emptyset\}\notin{\cal L}(\alpha)$, implying that the outcome of $\omega$ would not exist and therefore, $\omega$ would not be well-defined. Finally, condition $(v)$ says that if a coalition is ``strong enough" to implement an element in $\Omega_{\omega}$, so are all coalitions of the same size. A coalition system $\mathcal{L}$ satisfying all these conditions is called a \textbf{type-anonymous left coalition system}. Formally,

\begin{definition}
\label{leftsystem}
\emph{Let a rule} $f: {\cal R} \rightarrow \Omega_f$ \emph{with} $\mathit{int}(\Omega_f) \subseteq \Omega_{\omega} \subseteq \Omega_f\cup \Omega^{C^{2}}_{f}$. \emph{A} type-anonymous left coalition system on $\Omega_{\omega}$ \emph{is a correspondence} $\mathcal{L} : \Omega_{\omega} \rightarrow 2^A$ \emph{that assigns to each} $\alpha \in \Omega_{\omega}$ \emph{a collection of coalitions} $\mathcal{L}(\alpha)$ \emph{such that:}
\begin{enumerate}
\item[$(i)$] \emph{if} $S\in \mathcal{L}(\alpha)$ \emph{and} $S \subset S'$, \emph{then} $S' \in \mathcal{L}(\alpha)$,
\item[$(ii)$] \emph{if} $\alpha <^* \beta$ \emph{and} $S \in \mathcal{L}(\alpha)$, \emph{then} $S \in \mathcal{L}(\beta)$, 
\item[$(iii)$] \emph{if} $\max\Omega_f$ \emph{does not exist}, \emph{then for each} $\alpha\in \Omega_{\omega}$, $\emptyset\notin {\cal L}(\alpha)$,
\item[$(iv)$] \emph{if} $\max\Omega_f$ \emph{exists and} $\max \Omega_f \notin \Omega_{\omega}$, \emph{then for each} $\alpha\in \Omega_{\omega}\setminus\{\max \Omega_{\omega}\}$, $\emptyset\in {\cal L}(\max \Omega_{\omega}) \setminus{\cal L}(\alpha)$, \emph{and}
\item[$(v)$] \emph{if} $S\in \mathcal{L}(\alpha)$ \emph{and} $S'\subseteq A$ such that $|S'|=|S|$, \emph{then} $S' \in \mathcal{L}(\alpha)$.
\end{enumerate}
\end{definition}
\medskip

\noindent Conditions $(i)$-$(iv)$ derive from strategy-proofness \citep[see Definition 1 in][]{alcalde2024strategy}, while condition $(v)$ results from type-anonymity, which focuses only on the number of agents of each type of preference that support an alternative, making the identities of those agents irrelevant.\medskip

\noindent Finally, given a type-anonymous left coalition system, we define, for any vector of $\Omega_f$-restricted peaks, the first-step function $\omega$ as follows: starting form the left under order $\leq^{*}$, $\omega$ chooses the first element in $\Omega_{\omega}$ that receives ``enough" support. By ``enough" support we mean that the set of agents whose $\Omega_f$-restricted peaks are to the left of or at that element under $\leq^{*}$ must coincide with a winning coalition defined for that element. The function $\omega$ is a \textbf{generalized median voter function}.\footnote{The term ``generalized median voter function" has been previously introduced in the literature \cite[see, for instance,][]{barbera2011strategyproof}. Even though the definition does not apply a median, the name comes from the median voter functions defined in \cite{moulin1980strategy}, which are formally introduced in Subsection 2.1.} 

\begin{definition}
\label{voterscheme}
\emph{Let a rule} $f: \mathcal {R}\rightarrow \Omega_f$ \emph{and a type-anonymous left coalition system} $\mathcal{L}$ \emph{on} $\Omega_{\omega}$, with $\mathit{int}(\Omega_f) \subseteq \Omega_{\omega} \subseteq \Omega_f\cup \Omega^{C^{2}}_{f}$. \emph{The associated} generalized median voter function $\mathbf{\omega}$ \emph{is as follows:} \emph{for each} ${\bf p} \in \Omega_f^a$ \emph{and each} $R \in {\cal R}$ \emph{such that} $p(R) = {\bf p}$,
$$\omega(\mathbf{p})= \alpha \mbox{\emph{ if and only if}}$$ $$ \{i \in A: p(\Omega_i) \leq^* \alpha\} \in \mathcal{L}(\alpha)$$ \
\emph{and for each} $\beta \in \Omega_{\omega}$ \emph{such that} $\beta <^* \alpha$, 
$$\{i \in A: p(\Omega_i) \leq^* \beta\}\notin \mathcal{L}(\beta).$$
\end{definition}

\noindent There is a last aspect of the first step that is worthy mentioning. Let $x,y\in\Omega_f$ and $(x,y)\in \Omega^{C^{2}}_f$. If $\mathcal{L}(x)=\mathcal{L}(x,y)$, then pair $(x,y)$ can never be the output of the first step. This happens because the agents of $A$ only report alternatives in $\Omega_f$ as peaks and $x<^{*} (x,y)$.\medskip

\noindent The following proposition introduces the structure of the first step of any strategy-proof and type-anonymous rule on our domain. It states that if rule $f$ is strategy-proof and type-anonymous, $\omega$ is a type-anonymous generalized median voter function.

\begin{proposition}
\label{generalized}
Let $f: {\cal R} \rightarrow \Omega_f$ be strategy-proof and type-anonymous. Then, the function $\omega$ is a type-anonymous generalized median voter function on a set $\Omega_{\omega}$ satisfying the conditions of Proposition~\ref{range}.
\end{proposition}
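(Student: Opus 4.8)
The plan is to separate the part of the statement that is already supplied by the cited structural results from the genuinely new part contributed by type-anonymity. Since $f$ is strategy-proof, Corollary~\ref{structure0} gives the two-step decomposition with first-step function $\omega:\Omega^a\to\Omega\cup\Omega^2_C$, and Proposition~\ref{range} shows that its range $r_\omega$ satisfies conditions $(i)$--$(iii)$ there. Moreover, the structural result of \cite{alcalde2024strategy} (their Definition~1 and the associated characterization) already shows that strategy-proofness forces $\omega$ to be a generalized median voter function in the sense of Definition~\ref{voterscheme}, governed by a left coalition system $\mathcal L$ satisfying conditions $(i)$--$(iv)$ of Definition~\ref{leftsystem}. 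Hence the only thing left to establish is that type-anonymity yields the remaining condition $(v)$: that each $\mathcal L(\alpha)$ is closed under replacing a coalition by any other coalition of the same cardinality.

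First I would reduce condition $(v)$ to a symmetry property of $\omega$. Recalling that $\mathcal L(\alpha)$ is exactly the collection of ``winning'' coalitions for $\alpha$, i.e.\ $S\in\mathcal L(\alpha)$ iff $\omega(\mathbf p)\leq^*\alpha$ for every $\mathbf p\in\Omega^a$ whose left set $\{i\in A: p_i\leq^*\alpha\}$ equals $S$, condition $(v)$ is equivalent to the statement that $\omega$ is invariant under permutations of the agents in $A$, i.e.\ $\omega(\mathbf p)=\omega(\mathbf p\circ\sigma)$ for every permutation $\sigma$ of $A$. Indeed, if $S\in\mathcal L(\alpha)$ and $|S'|=|S|$, pick a permutation $\sigma$ of $A$ carrying $S'$ onto $S$; any peak vector with left set $S'$ is the $\sigma$-image of a peak vector with left set $S$, and these two vectors have left sets of equal cardinality at every threshold, so $A$-symmetry of $\omega$ transports ``$\leq^*\alpha$'' from the latter to the former, giving $S'\in\mathcal L(\alpha)$.

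To prove the symmetry of $\omega$, I would fix for each value $v\in\Omega$ a single-peaked preference $\hat R^v$ with peak $v$, and compare the profiles built from $\mathbf p$ and from $\mathbf p\circ\sigma$ (using $\hat R^{\cdot}$ on $A$) while keeping the agents of $D$ fixed at an arbitrary common profile $R_D$. Since $\sigma$ is a permutation within $A$ and the identity on $D$, it is admissible in the type-anonymity requirement, so $f(R)=f(R_\sigma)$ for every choice of $R_D$. Thus the two families of outcomes $\{f((\hat R^{\mathbf p})_A,R_D)\}_{R_D}$ and $\{f((\hat R^{\mathbf p\circ\sigma})_A,R_D)\}_{R_D}$ coincide as functions of $R_D$. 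It then suffices to observe that this family determines $\omega(\mathbf p)$: if $\omega(\mathbf p)\in\Omega$ the outcome is constant in $R_D$ and equals $\omega(\mathbf p)$, whereas if $\omega(\mathbf p)=(x,y)\in\Omega^2_C$ the outcome ranges over $\{x,y\}$ through the second-step function $t_{(x,y)}$; reading $\omega(\mathbf p)$ off the family then forces $\omega(\mathbf p)=\omega(\mathbf p\circ\sigma)$.

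The main obstacle is precisely this last ``read-off'': the family of outcomes determines $\omega(\mathbf p)$ only if a genuine pair outcome can be distinguished from a single-alternative outcome, i.e.\ only if the second-step function at each pair that actually occurs is non-constant as $R_D$ varies. This is where the structure of the double-quota enters. Condition $(i)$ of Definition~\ref{doublequota} ($q^D\geq 1$) guarantees that the right alternative is selected whenever all agents of $D$ prefer it, while condition $(iii)$, together with the fact that $(x,y)=\omega(\mathbf p)$, ensures that the $A$-side of the minimal double-quota is met, so that the left alternative is also selectable for a suitable $R_D$; hence a pair that truly arises carries a non-constant second step. Equivalently, any constant second step can be folded into the first step by redefining $\omega$ to output the corresponding single alternative, after which every pair in $r_\omega$ is non-degenerate and the read-off is faithful. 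Establishing this non-degeneracy, and checking the boundary case of a pair lying immediately to the right of the threshold $\alpha$, is the only delicate point; once it is in place the symmetry of $\omega$, and therefore condition $(v)$, follow, completing the proof that $\omega$ is a type-anonymous generalized median voter function on $r_\omega$.
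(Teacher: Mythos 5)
Your route is in substance the paper's own: both reduce the proposition, via the structural results of \cite{alcalde2024strategy} (Corollary~\ref{structure0}, Proposition~\ref{range}, and conditions $(i)$--$(iv)$ of Definition~\ref{leftsystem}), to showing that type-anonymity forces condition $(v)$ of Definition~\ref{leftsystem}, and both prove $(v)$ by permuting a coalition $S$ onto an equal-sized $S'$ inside $A$ and then exhibiting some $R_D$ at which the two distinct first-step values $\omega(\mathbf p)$ and $\omega(\mathbf p\circ\sigma)$ produce different final outcomes. The paper's proof is exactly your ``read-off'' in concrete form: it takes $R$ with $p(R_i)\leq^*\alpha\Leftrightarrow i\in S$, obtains $\omega(p(R))\leq^*\alpha<^*\omega(p(R_\sigma))$, picks an alternative $x\in\omega(p(R))\setminus\omega(p(R_\sigma))$ (w.l.o.g.), chooses $R'_D$ with $f(R_A,R'_D)=x$, and contradicts type-anonymity.

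There is, however, one concrete flaw in how you certify the non-degeneracy on which the read-off rests: you invoke conditions $(i)$ and $(iii)$ of Definition~\ref{doublequota}. That double-quota structure belongs to the Theorem~\ref{theorem1} characterization, which is established only in Section~\ref{sec5} through the equivalence with Theorem~\ref{theorem2} --- and Theorem~\ref{theorem2} rests on the present proposition; worse, the double-quota conditions already encode type-anonymity, the very axiom you are in the middle of exploiting, so the appeal is circular. The fact you need follows from strategy-proofness alone, via conditions $(i)$ and $(ii)$ of Definition~\ref{winningdef} (Definition~3 of \cite{alcalde2024strategy}). Indeed, if $\omega(\mathbf p)=(x,y)$, placing every dip at $x$ makes all agents of $D$ prefer $y$, and since each $S\in W(x,y)$ satisfies $S\cap D\neq\emptyset$, no left-decisive set can be contained in $L_{(x,y)}(R)$, so the outcome is $y$; conversely, the left set $E=\{i\in A:p(R_i)\leq^*x\}$ lies in ${\cal L}(x,y)\setminus{\cal L}(x)$ because $(x,y)$ was selected, an inclusion-minimal element $B\subseteq E$ of this difference is minimal in the whole difference, so condition $(ii)$ supplies $S\in W(x,y)$ with $S\cap A=B$, and placing every dip at $y$ puts $S$ inside $L_{(x,y)}(R)$, yielding $x$. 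With this substitution (and dropping the ``fold the constant second step into $\omega$'' alternative, which is both unnecessary --- no constant second step can occur at a pair in $r_\omega$ --- and problematic, since it would alter the function $\omega$ the proposition is about), your argument closes and coincides with the paper's proof; note also that your ``boundary case'' is handled in the paper simply by the w.l.o.g.\ choice of which of the two distinct values of $\omega$ contains an alternative outside the other.
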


\begin{proof}
\noindent By the proof of Proposition 3 in \cite{alcalde2024strategy}, it only remains to show that type-anonymity implies condition $(v)$ in Definition~\ref{leftsystem}. To see this, we prove that given $\alpha\in \Omega_{\omega}$, if $S\in \mathcal{L}(\alpha)$ and $S'\subset A$ is such that $|S'|=|S|$ and $S' \notin \mathcal{L}(\alpha)$, then $f$ is not type-anonymous.\smallskip

\noindent Let $\alpha\in \Omega_{\omega}$ and $S\subset A$ such that $S\in {\cal L}(\alpha)$. Suppose by contradiction that there is $S'\subset A$ such that $|S'|=|S|$ and $S'\notin {\cal L}(\alpha)$. Consider $R\in {\cal R}$ such that $p(\Omega_i)\leq^{*} \alpha \Leftrightarrow i\in S$. Since $S\in \mathcal{L}(\alpha)$, by Definition \ref{voterscheme}, $\omega(p(R))\leq^{*}\alpha$. Consider now a permutation $\sigma$ such that for each $j\in S$, $\sigma(j)\in S'$ and for each $k\in N\setminus S$, $\sigma(k)=k$. Note that, by construction, $\Omega_{\sigma}$ is such that $p((\Omega_{\sigma})_i)\leq^{*} \alpha \Leftrightarrow i\in S'$. Since $S'\notin \mathcal{L}(\alpha)$, by Definition \ref{voterscheme}, $\omega(p(\Omega_{\sigma}))>^{*}\alpha$. Hence, $\omega(p(R))\neq\omega(p(\Omega_{\sigma}))$ and there is $x\in \Omega_{\omega}$ such that, w.l.o.g., $x\in \omega(p(R))\setminus \omega(p(\Omega_{\sigma}))$. Consider $R'_D\in{\cal R}^{D}$ such that $f(\Omega_A,R'_D)=x$. Since $x\notin\omega(p(\Omega_{\sigma}))$ and $\omega(p(\Omega_{\sigma}))=\omega(p(\Omega_A,R'_D)_{\sigma})$, $x\notin\omega(p(\Omega_A,R'_D)_{\sigma})$. Therefore, $f((\Omega_A,R'_D)_{\sigma})\neq x=f(\Omega_A,R'_D)$, and $f$ is not type-anonymous.\end{proof}

\subsection{Second step}

\noindent Recall that the second step of any strategy-proof rule consists of a binary choice problem. In this step, we define for each pair in $\Omega_{\omega}$, a set of coalitions that represent the minimal ``support" required to implement the left alternative of the pair. That is, the left alternative is selected if the coalition formed by all agents who prefer the left alternative to the right one coincides with one of the coalitions defined for that pair or is a supercoalition of any of them. Let us illustrate the process with the following example:\medskip

\begin{example}[Cont. of Example \ref{ex3}]\label{ex4} Let $D=\{j_1,j_2,j_3\}$, and consider $\{q_{(3,4)}\}=\{(0,2),(1,1)\}$.

\begin{itemize}
\item $q_{(3,4)}=(0,2)\Rightarrow W(3,4)=\{\{j_1,j_2\},\{j_1,j_3\},\{j_2,j_3\}\}$
\item $q_{(3,4)}=(1,1)\Rightarrow W(3,4)=\{\{i_1,j_1\},\{i_1,j_2\},\{i_1,j_3\},\{i_2,j_1\},\{i_2,j_2\},\{i_2,j_3\},\{i_3,j_1\},\{i_3,j_2\},\{i_3,j_3\}\}$
\end{itemize}

\noindent Hence, $W(3,4)=\{S\subseteq N: |S\cap A|=0$ and $|S\cap D|=2\} \cup\{S'\subseteq N: |S'\cap A|=1$ and $|S'\cap D|=1\}$.\medskip

\noindent Let us consider the following profiles:
\begin{itemize}
\item Let $R'_D\in{\cal R}$ be such that $d(R')= (2,2,1)$. Recall that $p(R')=(1,4,4)$ and then, $\omega(p(R'))=(3,4)$. Since $\{i\in A: 3\, P'_i\, 4\}=\{i_1\}$ and $\{j\in D: 3\, P'_j\, 4\}=\{\emptyset\}$, we have that $\{i_1\}\not\supseteq S\in W(3,4)$. Hence, $f(R')=4$. 

\item Let $\bar{R}'_D\in{\cal R}$ be such that $d(R')= (2,2,4)$. Recall that $p(R')=(1,4,4)$ and then, $\omega(p(R'_A,\bar{R}'_D))=(3,4)$. Since $\{i\in A: 3\, P'_i\, 4\}=\{i_1\}$ and $\{j\in D: 3\, \bar{P}'_j\, 4\}=\{j_3\}$, we have that $\{i_1,j_3\}\in W(3,4)$. Hence, $f(R'_A, \bar{R}'_D)=3$.
\end{itemize}
\end{example}

\noindent This example shows the structure of the second step of any strategy-proof and type-anonymous rule on our domain. For each $(x,y)\in \Omega_{\omega}\cap \Omega^{C^{2}}_f$, we define a set of ``winning" coalitions that must satisfy certain restrictions. First, we consider minimal coalitions, meaning that there is no other coalition that contains them or is contained in them (condition $(i)$). Moreover, condition $(ii)$ states that each coalition contains at least an agent with single-dipped preferences. Recall that $(x,y)$ is chosen by $\omega$ only when the agents that support $(x,y)$ do not support $x$, i.e., when the agents form a coalition in ${\cal L}(x,y)\setminus {\cal L}(x)$. Hence, condition $(iii)$ requires that there must be a coalition whose agents with single-peaked preferences belongs to ${\cal L}(x,y)\setminus {\cal L}(x)$. Finally, condition $(iv)$ states that if a coalition is ``strong enough" to implement the left alternative of the pair, then so are all coalition formed by the same number of both agents with single-peaked preferences and agents with single-dipped preferences. A set of coalitions $W$ satisfying these conditions is called a \textbf{collection of type-anonymous left-decisive sets}.\medskip


\begin{definition}
\label{winningdef}
 \emph{Let a rule} $f: \mathcal {R}\rightarrow \Omega_f$  \emph{and a generalized median voter function} $\omega$ \emph{on a set} $\Omega_{\omega}$ that satisfies the conditions of Proposition~\ref{range}, with \emph{associated type-anonymous left coalition system} ${\cal L}$. \emph{A} collection of type-anonymous left-decisive sets \emph{is a correspondence} $W: \Omega_{\omega} \cap \Omega^{C^{2}}_{f}\rightarrow 2^N$ \emph{that assigns to each} $(x,y)\in \Omega_{\omega} \cap \Omega^{C^{2}}_{f}$ \emph{a collection of coalitions} $W(x,y)$ \emph{such that:}
 \begin{itemize}
 \item[(i)] \emph{for each} $S, S'\in W(x,y)$, neither $S'\subset S$ nor $S\subset S'$,
		\item[(ii)] \emph{for each} $S\in W(x,y)$, $S\cap D\neq \emptyset$,
		\item[(iii)] \emph{for each minimal coalition} $B$ of ${\cal L}(x,y)\setminus{\cal L}(x)$, \emph{there is} $S\in W(x,y)$ \emph{such that} $S\cap A=B$, and
		\item[(iv)] \emph{if} $S\in W(x,y)$, and $S'\subset N$ such that $[|S'\cap A|=|S\cap A|$ and $|S'\cap D|=|S\cap D|]$, then $S'\in W(x,y)$.
\end{itemize}
		\end{definition}
 
 \noindent Conditions $(i)-(iii)$ derive from strategy-proofness \citep[see Definition 3 in][]{alcalde2024strategy}. Condition $(iv)$ is a consequence of type-anonymity, which focuses only on the number of agents of each type of preference that support an alternative, making the identities of those agents irrelevant.\medskip
 
\noindent Finally, given a collection of type-anonymous left-decisive sets $W$, we define for each $(x,y)\in \Omega_{\omega} \cap \Omega_f^{C^{2}}$, a binary choice function $g_{(x,y)}$ as follows: alternative $x$ is chosen if the set of agents who prefer $x$ to $y$ coincides with some coalition in $W(x,y)$ or is a supercoalition of one of them. Otherwise, $y$ is chosen. These binary functions are called \textbf{voting by collections of type-anonymous left-decisive sets}.\medskip
 
\noindent Before introducing the formal definition, we need the following concepts. Let $\omega(\mathbf{p}) \in \Omega^{C^{2}}_{f}$, we denote by $\underline{\omega}(\mathbf{p})$ and $\overline{\omega}(\mathbf{p})$ the left and the right alternative of the pair respectively. For each $R \in {\cal R}$ such that $p(R)=\mathbf{p}$, let $L_{\omega(\mathbf{p})}(R)$ be the set of agents that prefer $\underline{\omega}(\mathbf{p})$ to $\overline{\omega}(\mathbf{p})$ at $R$, i.e., $L_{\omega(\mathbf{p})}(R)\equiv\{i\in N: \underline{\omega}(\mathbf{p})\, P_i\, \overline{\omega}(\mathbf{p})\}$.\medskip

\begin{definition}
\label{votingdef}
 \emph{Let a rule} $f: \mathcal {R}\rightarrow \Omega_f$ \emph{and a generalized median voter function} $\omega$ \emph{on a set} $\Omega_{\omega}$ satisfying the conditions of Proposition~\ref{range}, \emph{and with associated type-anonymous left coalition system} ${\cal L}$. \emph{For each} ${\bf p} \in \Omega_f^a$ \emph{such that} $\omega(\mathbf{p}) \in \Omega_{\omega} \cap \Omega^{C^{2}}_{f}$, \emph{the associated binary function} $g_{\omega({\bf p})}$ \emph{is a} voting by collections of type-anonymous left-decisive sets \emph{if there is collection of type-anonymous left-decisive sets} $W$ \emph{such that for each} $R \in {\cal R}$ \emph{with} $p(R) = {\bf p}$,
	$$g_{\omega({\bf p})}(R) = \left\{
	\begin{array}{ll}
	\underline{\omega}(\mathbf{p}) & \mbox{ \emph{if} } S \subseteq L_{\omega({\bf p})}(R) \mbox{ \emph{for some} } S \in W(\omega({\bf p})), \\*[5pt]
	\overline{\omega}(\mathbf{p}) & \mbox{ \emph{otherwise}. } 
	\end{array}
	\right. \vspace{0.2cm}$$
\end{definition}

\noindent The next proposition introduces the structure of the second step of any strategy-proof and type-anonymous rule on our domain. It states that if rule $f$ is strategy-proof and type-anonymous, $g_{\omega(\mathbf{p})}$ is a voting by collections of type-anonymous left-decisive sets.

\begin{proposition}
\label{second-step2}
Let $f: {\cal R} \rightarrow \Omega_f$ be strategy-proof and type-anonymous, and $\omega$ be its associated type-anonymous generalized median voter function on $\Omega_\omega$, satisfying the conditions of Proposition~\ref{range}. Then, the family of binary functions $\{g_{\omega(\mathbf{p})} :  {\cal R} \rightarrow \omega(\mathbf{p})\}_{\omega(\mathbf{p})\in \Omega_{\omega}\cap\Omega^{C^{2}}_{f}}$ is such that for each $\omega(\mathbf{p}) \in \Omega_{\omega}\cap\Omega^{C^{2}}_{f}$, $g_{\omega(\mathbf{p})}$ is a voting by collections of type-anonymous left-decisive sets.
\end{proposition}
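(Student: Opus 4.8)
The plan is to follow the template of the proof of Proposition~\ref{generalized}: invoke the strategy-proofness part of the characterization and then isolate the single new requirement contributed by type-anonymity. Concretely, by the corresponding result in \cite{alcalde2024strategy} (their Definition~3 and the accompanying proposition), strategy-proofness alone already guarantees that each $g_{\omega(\mathbf{p})}$ is a voting by collections of left-decisive sets whose winning coalitions satisfy conditions $(i)$ and $(ii)$ of Definition~\ref{winningdef}, with the minimality condition $(iii)$ imposed by convention. Hence it remains only to show that type-anonymity forces condition $(iv)$, namely that $W(x,y)$ is closed under replacing a coalition $S$ by any $S'$ with $|S'\cap A|=|S\cap A|$ and $|S'\cap D|=|S\cap D|$.

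It is cleanest to argue at the level of the upward closure $\mathcal{W}(x,y)$ of $W(x,y)$, that is, the collection of coalitions $T$ whose being the set of $\underline{\omega}(\mathbf{p})$-supporters leads $g_{\omega(\mathbf{p})}$ to select $\underline{\omega}(\mathbf{p})$; the set $W(x,y)$ is exactly the minimal elements of $\mathcal{W}(x,y)$. First I would show that $\mathcal{W}(x,y)$ is invariant under type-preserving permutations. Fixing $(x,y)\in r_\omega\cap\Omega^2_{C}$ and a decisive $T\in\mathcal{W}(x,y)$, I would choose a profile $R$ with $p(R)=\mathbf{p}$ such that $\omega(\mathbf{p})=(x,y)$ and $L_{\omega(\mathbf{p})}(R)=T$. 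Here the $A$-part of $T$ is forced to equal $\{i\in A: p(R_i)\le x\}$, which lies in $\mathcal{L}(x,y)\setminus\mathcal{L}(x)$ precisely when $(x,y)$ is the first-step outcome, while each $D$-agent can be placed on either side of the pair freely by locating her dip. Since $T$ is decisive, $f(R)=\underline{\omega}(\mathbf{p})$.

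Next I would take a type-preserving permutation $\sigma$ (so $\sigma(i)\in A$ iff $i\in A$) carrying $S'\cap A$ onto $T\cap A$ and $S'\cap D$ onto $T\cap D$, which exists exactly because the two coalitions match type by type. The first-step outcome is unchanged: $\omega$ depends only on the multiset of $A$-peaks and $\sigma$ permutes $A$ within itself, so by the invariance of $\omega$ under within-type permutations established in Proposition~\ref{generalized} (condition $(v)$ of Definition~\ref{leftsystem}) one still gets $\omega(p(R_\sigma))=(x,y)$. By construction $L_{\omega(\mathbf{p})}(R_\sigma)=S'$, so if $S'$ were not decisive then $f(R_\sigma)=\overline{\omega}(\mathbf{p})\neq\underline{\omega}(\mathbf{p})=f(R)$, contradicting type-anonymity; hence $S'\in\mathcal{W}(x,y)$, and membership in $\mathcal{W}(x,y)$ depends only on the pair $(|T\cap A|,|T\cap D|)$. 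Finally I would deduce condition $(iv)$ for the minimal generators: if $S\in W(x,y)$ and $S'$ matches it type by type, then $S'\in\mathcal{W}(x,y)$ by the invariance just proved, and $S'$ is minimal, since any proper subset $U\subsetneq S'$ in $\mathcal{W}(x,y)$ would, under a type-preserving bijection of $S'$ onto $S$, yield a proper subset of $S$ in $\mathcal{W}(x,y)$, contradicting the minimality of $S$; therefore $S'\in W(x,y)$.

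The step I expect to be the main obstacle is precisely this passage between the decisive family $\mathcal{W}(x,y)$ and its minimal generators $W(x,y)$: condition $(iv)$ is stated for the minimal collection, whereas the clean permutation argument naturally lives at the level of decisiveness, so one must verify that the invariance of $\mathcal{W}(x,y)$ descends to its minimal elements. A secondary delicate point is that the $A$-support in the second step is pinned down by the first step (an $A$-agent prefers $x$ to $y$ exactly when her peak is at most $x$), so that only the $D$-agents are freely reassignable at the profile level; reshuffling the $A$-agents is legitimate only because it leaves the first-step outcome intact, which is exactly where the already-established type-anonymity of $\omega$ is essential.
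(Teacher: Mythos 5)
Your proposal is correct and follows the same skeleton as the paper's proof: reduce to condition $(iv)$ of Definition~\ref{winningdef} via the strategy-proofness part of Proposition 4 in \cite{alcalde2024strategy}, realize a coalition $S$ as the set of supporters of $\underline{\omega}(\mathbf{p})$ at a profile that still selects the pair in the first step, apply a type-preserving permutation carrying $S$ onto $S'$, and contradict type-anonymity. Where you genuinely diverge---and improve on the paper---is the passage through the upward closure $\mathcal{W}(x,y)$. The paper argues directly with the minimal collection: it assumes $S'\notin W(\omega(p(R)))$ and concludes from Definition~\ref{votingdef} that $g_{\omega(p(\bar{R}_{\sigma}))}=\overline{\omega}(p(R))$, which is a non sequitur as written, since $g$ selects $\underline{\omega}(\mathbf{p})$ whenever the supporter set \emph{contains} some member of $W$, and $S'\notin W$ does not rule out $T\subsetneq S'$ with $T\in W$. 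Your two-level argument handles exactly this: you prove invariance of decisiveness under type-preserving permutations (where the clean permutation argument lives), and then show minimality descends by pushing a hypothetical proper decisive subset $U\subsetneq S'$ through a type-preserving bijection onto a proper decisive subset of $S$, contradicting $S$'s minimality---the second application of the invariance that the paper's proof silently needs. Your secondary caveat is also on target: since agents in $A$ prefer $x$ to $y$ exactly when their peak is at most $x$, the $A$-part of the supporter set is pinned down by the first step, and reshuffling $A$-agents is licensed only because condition $(v)$ of Definition~\ref{leftsystem}, already established in Proposition~\ref{generalized}, keeps $\omega(p(\bar{R}_{\sigma}))=\omega(p(\bar{R}))$; the paper relies on the same facts but leaves both the realizability of $L=S$ and this invariance as ``by construction.'' In short: same strategy, but your version closes a small logical gap in the published argument at the cost of one extra (routine) minimality-descent step.
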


\begin{proof}
\noindent By the proof of Proposition 4 in \cite{alcalde2024strategy}, it only remains to show that type-anonymity implies condition $(iv)$ in Definition~\ref{winningdef}. To see this, we prove that if $S\in W(\omega({\bf p}))$, $S'\subset N$ is such that [$|S'\cap A|=|S\cap A|$ and $|S'\cap D|=|S\cap D|$], and $S'\notin W(\omega({\bf p}))$, then $f$ is not type-anonymous.\smallskip

\noindent Let $R\in {\cal R}$ be such that $\omega(p(R))\in \Omega_{\omega}\cap\Omega_f^{C^{2}}$ and $S\subset N$ be such that $S\in W(\omega(p(R)))$. Suppose by contradiction that there is $S'\subset N$ such that $|S'\cap A|=|S\cap A|$ and $|S'\cap D|=|S\cap D|$, and $S'\notin W(\omega(p(R)))$. Consider now $\bar{R}\in {\cal R}$ such that $\omega(p(\bar{R}))=\omega(p(R))$ and $L_{\omega(p(R))}(\bar{R})= S$. Since $S\in W(\omega(p(R)))=W(\omega(p(\bar{R})))$, by Definition \ref{votingdef}, $g_{\omega(p(\bar{R}))}=\underline{\omega}(p(R))$ and then, $f(\bar{R})=\underline{\omega}(p(R))$. Consider a permutation $\sigma$ such that
for each $j\in S$, $\sigma(j)\in S'$ and for each $k\in N\setminus S$, $\sigma(k)=k$. Hence, by construction, $\bar{R}_{\sigma}$ is such that $\omega(p(\bar{R}_{\sigma}))=\omega(p(\bar{R}))$ and $L_{\omega(p(R))}(\bar{R}_{\sigma})=S'$. Since $S'\notin W(\omega(p(R)))=W(\omega(p(\bar{R}_{\sigma})))$, by Definition \ref{votingdef}, $g_{\omega(p(\bar{R}_{\sigma}))}=\overline{\omega}(p(R))$ and then, $f(\bar{R}_{\sigma})=\overline{\omega}(p(R))$. Therefore, $f(\bar{R}_{\sigma})=\overline{\omega}(p(R))\neq\underline{\omega}(p(R))=f(\bar{R})$ and $f$ is not type-anonymous.\end{proof}

\subsection{The characterization}\label{result}

The next result establishes that the necessary conditions derived from strategy-proofness and type-anonymity in the former propositions are also sufficient.

\begin{theorem}
\label{theorem2}
The following statements are equivalent:
\begin{itemize}
\item[(i)] $f:\mathcal{R}\rightarrow\Omega_f$ is strategy-proof and type-anonymous.
\item[(ii)] $f:\mathcal{R}\rightarrow\Omega_f$ is group strategy-proof and type-anonymous.
\item[(iii)] There is a type-anonymous left coalition system with associated generalized median voter function $\omega$ on a set $\Omega_\omega$, satisfying the conditions of Proposition~\ref{range}, and a set of voting by collections of type-anonymous left-decisive sets $\{g_{\omega(\mathbf{p})} :  {\cal R} \rightarrow \omega(\mathbf{p})\}_{\omega(\mathbf{p})\in \Omega_{\omega}\cap\Omega^{C^{2}}_{f}}$ such that for each $R\in {\cal R}$ with $p(R)=\mathbf{p}$, $$f(R) = \left\{
	\begin{array}{ll}
	\omega({\bf p})  & \mbox{ if } \omega({\bf p}) \in \Omega_f \\*[5pt]
	
	g_{\omega({\bf p})}(R) & \mbox{ if } \omega({\bf p}) \in \Omega^{C^{2}}_{f}.
	\end{array}
	\right.$$
\end{itemize}
\end{theorem}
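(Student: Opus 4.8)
The plan is to prove Theorem~\ref{theorem2} by establishing the cycle of implications $(ii)\Rightarrow(i)\Rightarrow(iii)\Rightarrow(ii)$. The implication $(ii)\Rightarrow(i)$ is immediate, since group strategy-proofness trivially implies strategy-proofness (take $S=\{i\}$ singletons). For $(i)\Rightarrow(iii)$, I would simply assemble the two structural results already proved. Given a strategy-proof and type-anonymous $f$, Corollary~\ref{structure0} decomposes $f$ into a first-step function and a family of binary decision functions. Proposition~\ref{generalized} then identifies the first-step function $\omega$ as a type-anonymous generalized median voter function on some $r_\omega$ satisfying Proposition~\ref{range}, and Proposition~\ref{second-step2} identifies each $g_{\omega(\mathbf{p})}$ as a voting by collections of type-anonymous left-decisive sets. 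Putting these together yields exactly the two-step representation in $(iii)$.

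The substantive direction is $(iii)\Rightarrow(ii)$: starting from a rule $f$ built from a type-anonymous generalized median voter function $\omega$ and a family of voting by collections of type-anonymous left-decisive sets, I must show $f$ is group strategy-proof and type-anonymous. Type-anonymity is the easier half: I would argue that for a type-preserving permutation $\sigma$, the profile $R_\sigma$ has the same \emph{multiset} of $\Omega$-restricted peaks among agents of $A$, so by condition $(v)$ of Definition~\ref{leftsystem} (which makes $\mathcal{L}(\alpha)$ depend only on coalition sizes) we get $\omega(p(R_\sigma))=\omega(p(R))$; and when this common value is a pair, condition $(iv)$ of Definition~\ref{winningdef} makes $W$ depend only on the numbers $|S\cap A|$ and $|S\cap D|$, so $g_{\omega(\mathbf{p})}(R_\sigma)=g_{\omega(\mathbf{p})}(R)$ as well. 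Hence $f(R_\sigma)=f(R)$.

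For group strategy-proofness I would invoke the equivalence argument the authors flag after Theorem~\ref{theorem1}: by Theorem~2 in \cite{barbera2010individual}, strategy-proofness and group strategy-proofness coincide on domains satisfying indirect sequential inclusion, and Example~$(viii)$ of Section~4.5 there confirms our mixed domain qualifies. Thus it suffices to prove plain strategy-proofness, which I would verify step by step: a single-peaked agent $i\in A$ only influences the outcome through $\omega$ via her reported $\Omega$-restricted peak, and the generalized median voter structure (conditions $(i)$ and $(ii)$ of Definition~\ref{leftsystem}) guarantees she cannot pull the first-step outcome to a point she strictly prefers — this is the standard single-peaked median argument relative to order $\leq^*$ — while her vote in a tie pair cannot help her beyond what truth-telling already secures. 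A single-dipped agent $j\in D$ does not participate in the first step, so she can only affect the second-step binary choice, where monotonicity of $W$ (supercoalition closure) gives the usual voting-by-committees strategy-proofness on $\{x,y\}$ against her single-dipped, hence dichotomous-on-pairs, preference. The main obstacle is the careful bookkeeping showing that deviations \emph{across} steps are unprofitable: when an agent of $A$ misreports to change the first-step outcome from a single alternative to a pair (or vice versa), I must confirm, using the alignment between $\mathcal{L}(x)$, $\mathcal{L}(x,y)$ and $W(x,y)$ enforced by condition $(ii)$ of Definition~\ref{winningdef}, that the reachable second-step outcome is never strictly better for her than the honest first-step outcome. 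The authors defer this to Section~\ref{sec5}, so I expect the clean route is to prove the analogous claim once for the Theorem~\ref{theorem1} representation and then transfer it via the equivalence of the two characterizations.
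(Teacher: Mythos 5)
Your skeleton largely mirrors the paper's: the necessity direction $(i)\Rightarrow(iii)$ is assembled, exactly as in the paper, from Corollary~\ref{structure0} and Propositions~\ref{generalized} and \ref{second-step2}, and your type-anonymity argument for the converse (type-preserving permutations preserve the multiset of peaks in $A$, then condition $(v)$ of Definition~\ref{leftsystem} fixes $\omega$, and the cardinality condition of Definition~\ref{winningdef} fixes $g_{\omega(\mathbf{p})}$ when the first step yields a pair) is essentially verbatim the sufficiency computation in the paper's proof. The genuine gap is in your treatment of strategy-proofness in $(iii)\Rightarrow(ii)$. The paper never verifies strategy-proofness of the decomposition at all: it observes that conditions $(i)$--$(iv)$ of Definition~\ref{leftsystem} and the non-type-anonymity conditions of Definition~\ref{winningdef} are precisely the conditions characterizing (group) strategy-proof rules in Theorem~1 of \cite{alcalde2024strategy}, so any $f$ decomposable as in $(iii)$ is a member of that already-characterized family and is group strategy-proof with no new work (which also makes your detour through \cite{barbera2010individual} unnecessary, since that theorem already delivers group strategy-proofness directly). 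Your alternative --- a direct step-by-step verification --- is left incomplete at exactly the hard point you yourself flag: the cross-step deviations, where an agent in $A$ moves the median from a single alternative to a pair (or shifts which pair arises) and the interplay between ${\cal L}(x)$, ${\cal L}(x,y)$ and $W(x,y)$ must be checked. That bookkeeping is the substance of the sufficiency proof in \cite{alcalde2024strategy} and is not routine; sketching it does not discharge it.

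Your proposed fallback also inverts the paper's architecture and would be circular as stated. You suggest proving strategy-proofness ``once for the Theorem~\ref{theorem1} representation and then transferring it via the equivalence of the two characterizations.'' But in the paper the dependency runs the other way: Theorem~\ref{theorem2} is proved first (by reduction to \cite{alcalde2024strategy}), and Theorem~\ref{theorem1} is then \emph{derived} from it via the Section~\ref{sec5} equivalence --- there is no independently established strategy-proofness result for the median/double-quota representation to transfer from. Following your route would leave sufficiency unproven in both representations, or force you to reprove the sufficiency argument of \cite{alcalde2024strategy} from scratch in the median language. The fix is one sentence: note that the type-anonymity conditions ($(v)$ in Definition~\ref{leftsystem}, the cardinality condition in Definition~\ref{winningdef}) are \emph{additional} restrictions on the structures of \cite{alcalde2024strategy}, so their Theorem~1 applies verbatim and yields group strategy-proofness of any rule as in $(iii)$; only type-anonymity then needs a direct proof, which you already have.
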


\begin{proof}
\noindent By the proof of Theorem 1 in \cite{alcalde2024strategy}, it only remains to show that condition $(v)$ in Definition~\ref{leftsystem} and condition $(iv)$ in Definition~\ref{winningdef} are necessary and sufficient for type-anonymity. Propositions~\ref{generalized} and \ref{second-step2} show that they are necessary, so we prove here that they are also sufficient. Consider any $f$ that is decomposable as described in $(iii)$.\smallskip

\noindent Let $R\in {\cal R}$ be such that $\omega(p(R))=\alpha$. By Definition \ref{voterscheme}, $\{i \in A: p(\Omega_i) \leq^* \alpha\} \in \mathcal{L}(\alpha)$ and for each $\beta \in \Omega_{\omega} $ such that $\beta <^* \alpha$,
$\{i \in A: p(\Omega_i) \leq^* \beta\}\notin \mathcal{L}(\beta)$. Consider now a permutation $\sigma$ such that for each $j\in A$, $\sigma(j)\in A$ and for each $k\in D$, $\sigma(k)=k$. By construction, $|\{i \in A: p((\Omega_{\sigma})_{i}) \leq^* \alpha\}|=|\{i \in A: p(\Omega_i) \leq^* \alpha\}|$ and for each $\beta \in \Omega_{\omega} $ such that $\beta <^* \alpha,$
$|\{i \in A: p((\Omega_{\sigma})_{i}) \leq^* \beta\}|=|\{i \in A: p(\Omega_{i}) \leq^* \beta\}|$. Therefore, by Definition \ref{voterscheme}, $\omega(p(\Omega_{\sigma}))=\alpha.$ If $\alpha\in \Omega_f$, then $f(\Omega_{\sigma})=\alpha=f(R)$ and $f$ is type-anonymous. Assume now that $\alpha\in \Omega_f^{C^{2}}$, i.e., $\alpha=(\underline{\alpha},\overline{\alpha})$. If $f(R)=\underline{\alpha}$, then by Definition \ref{votingdef}, $g_{\alpha}(R) =
	\underline{\alpha}$ and there is $S\in W(\alpha)$ such that $S\subseteq L_{\alpha}(R)$. Note that, by construction, $|L_{\alpha}(\Omega_{\sigma})|=|L_{\alpha}(R)|$ and, by Definition \ref{winningdef}, there is $S'\in W(\alpha)$ such that $S'\subseteq L_{\alpha}(\Omega_{\sigma})$. Then, $g_{\alpha}(\Omega_{\sigma}) =
	\underline{\alpha}$, and $f(\Omega_{\sigma})=\underline{\alpha}=f(R)$. The argument is similar if $f(R)=\overline{\alpha}$ and thus omitted. Hence, $f$ is type-anonymous. \end{proof}

\subsection{A strategy-proof rule that is not type-anonymous}

\noindent We finally provide an example of a rule that is strategy-proof (and then it belongs to the rules characterized in \cite{alcalde2024strategy}), but it is not type-anonymous (and then it does not belong to our characterized rules). We are going to consider a variation of Example~\ref{ex3}.

\begin{example}\label{ex5} Let $A=\{i_1,i_2,i_3\}$, $D=\{j_1,j_2,j_3\}$, and $\Omega_f=\{1\}\cup[2,3]\cup \{4\}$. Consider a rule $f$ with $\Omega_\omega=\{1\}\cup[2,3]\cup \{(3,4)\}$ such that: 

\begin{itemize}
\item $\mathcal{L}(1)=\{i_1,i_2,i_3\}$;
\item For each $x\in[2,3]$, $\mathcal{L}(x)=\{\{i_1,i_3\},\{i_2,i_3\},\{i_1,i_2,i_3\} \}$;
\item $\mathcal{L}(3,4)=2^A$;
\item $W(3,4)=\{S\subseteq N: |S\cap A|=0$ and $|S\cap D|=2\} \cup \{i_2,j_1\}$.
\end{itemize}

\noindent Let us consider the following profiles:

\begin{itemize}
\item Let $R\in{\cal R}$ be such that $p(R)=(1,2,4)$ and $d(R)=(2,2,4)$. Then, $\{i \in A: p(\Omega_i) \leq^* 2\}=\{i_1,i_2\} \in {\cal L}(2)$ and $\{i \in A: p(\Omega_i) \leq^* 1\}=\{i_1\} \notin {\cal L}(1)$. Hence, $\omega(p(R))=2\in\Omega_f$ and then, $f(R)=2$.
\item Let $R'\in{\cal R}^{A}$ be such that $p(R')=(2,1,4)$ and $d(R')=d(R)=(2,2,4)$. Then, $\{i \in A: p(R'_i) \leq^* (3,4)\}=\{i_1,i_2\} \in {\cal L}(3,4)$ and $\{i \in A: p(R'_i) \leq^* 3\}=\{i_1,i_2\} \notin {\cal L}(3)$. Hence, $\omega(p(R'))=(3,4)\in\Omega^{C^{2}}_f$. Since $\{i \in N: 3\, P'_i\, 4\}=\{i_1,i_2,j_3\}\notin W(3,4)$. Hence, $g(R')=4$ and $f(R')=4.$ 
\item Let $R''\in{\cal R}^{A}$ be such that $p(R'')=p(R')=(2,1,4)$ and $d(R'')=(4,2,2)$. Then, $\{i \in A: p(R''_i) \leq^* (3,4)\}=\{i_1,i_2\} \in {\cal L}(3,4)$ and $\{i \in A: p(R''_i) \leq^* 3\}=\{i_1,i_2\} \notin {\cal L}(3)$. Hence, $\omega(p(R''))=(3,4)\in\Omega^{C^{2}}_f$. Since $\{i \in N: 3\, P''_i\, 4\}=\{i_1,i_2,j_i\}\supset\{i_2,j_1\}\in W(3,4)$. Hence, $g(R'')=3$ and $f(R'')=3.$ 
\end{itemize}

\noindent Note that by construction $R'=\Omega_{\sigma}$ where $\sigma$ is a permutation such that $\sigma(i_1)=i_2$ and $\sigma(i_2)=i_1$, and for each $k\in N\setminus\{i_1,i_2\}$, $\sigma(i_k)=i_k$. However, $f(R)=2\neq 4=f(R')$. Similarly, $R''=R'_{\sigma'}$ where $\sigma'$ is a permutation such that $\sigma'(j_1)=j_3$ and $\sigma'(j_3)=j_1$, and for each $k\in N\setminus\{j_1,j_3\}$, $\sigma'(i_k)=i_k$. However, $f(R')=4\neq 3=f(R'')$. Thus, $f$ is not type-anonymous, but it can be easily checked that it belongs to the family of rules characterized in \cite{alcalde2024strategy}.
\end{example}

\section{The equivalence between the characterizations}\label{sec5}

\noindent This section shows the equivalence of Theorems~\ref{theorem1} and \ref{theorem2}. In particular, we show that part $(iii)$ of both theorems is equivalent. We will use the superscripts T1 and T2 to distinguish between both approaches when clarification seems necessary.\medskip

\noindent $\Rightarrow)$ \emph{Given a rule $f$ as in Theorem~\ref{theorem1} $(iii)$, we show that it can be described as in Theorem~\ref{theorem2} $(iii)$.}\smallskip

\noindent \emph{\underline{First step}}\smallskip

\begin{itemize}
\item We first show how, given a collection of fixed locations, a type-anonymous left coalition system can be constructed.\smallskip

\noindent Let $f$ be a rule with range $\Omega_f$ and fixed locations $\gamma_ {f}^{1},\ldots, \gamma_{f}^{a+1}\in \Omega_f\cap\Omega^{C^{2}}_{f}$ as in Definition \ref{mixedmedian}. Then, $\Omega_{\med}=int({\Omega_f})\cup \{\gamma_{f}^1,\ldots,\gamma_{f}^{a+1}\}$. We define $\Omega_{\omega}=\Omega_{\med}$ and construct a type-anonymous left coalition system ${\cal L}$ as follows: for each element $\alpha\in \Omega_{\omega}$, we first find $\underline{M}^{\alpha}$, i.e., how many fixed locations are to the left of or at $\alpha$. Recall that $(a+1)-\underline{M}^{\alpha}$ is the minimum number of $\Omega_f$-restricted peaks to the left of or at $\alpha$ required to implement $\alpha$. We then include in each ${\cal L}(\alpha)$ all the coalitions formed by $(a+1)-\underline{M}^{\alpha}$ agents and all its supercoalitions. Finally, if $\max \Omega_{\omega}\in\Omega_{f}^{C^{2}}$, then we include the emptyset in ${\cal L}(\max \Omega_{\omega})$. Formally, for each $\alpha\in \Omega_{\omega}\setminus\max \Omega_{\omega}$, ${\cal L}(\alpha)\equiv\{S\subseteq A: |S|\geq(a+1)-\underline{M}^{\alpha}\}$, and ${\cal L}(\max \Omega_{\omega})=2^{A}$ if $\max \Omega_{\omega}\in\Omega_{f}^{C^{2}}$ or ${\cal L}(\max \Omega_{\omega})=2^{A}\setminus\{\emptyset\}$ if $\max \Omega_{\omega}\in \Omega_f$. It can be easily checked that, by construction, ${\cal L}$ satisfies all conditions in Definition \ref{leftsystem}, and hence it is a type-anonymous left coalition system.

\item We now show that, given a profile $R\in\mathcal{R}$, the outcome of $\med(p(R))$ for the given collection of fixed locations coincides with the outcome of $\omega(p(R))$ for the type-anonymous left coalition system constructed.\smallskip

\noindent Consider $R\in{\cal R}$ such that $\med(p(R))=\alpha$. By the definition of the median, we know that $\underline{M}^{\alpha}+|\{i\in A: p(\Omega_i)\leq^{*}\alpha\}|\geq a+1$ and for each $\beta \in \Omega_{\omega}$ with $\beta<^{*}\alpha$, $\underline{M}^{\beta}+|\{i\in A: p(\Omega_i)\leq^{*}\beta\}|< a+1$. Therefore, 
\begin{equation}\label{eq1}
|\{i\in A: p(\Omega_i)\leq^{*}\alpha\}|\geq a+1-\underline{M}^{\alpha}
\end{equation}
and 
\begin{equation} \label{eq2}
|\{i\in A: p(\Omega_i)\leq^{*}\beta\}|< a+1-\underline{M}^{\beta}.
\end{equation}

\noindent By construction, ${\cal L}(\alpha)=\{S\subseteq A: |S| \geq(a+1)-\underline{M}^{\alpha}\}$ and ${\cal L}(\beta)=\{S\subseteq A: |S| \geq(a+1)-\underline{M}^{\beta}\}$. Hence, inequalities (\ref{eq1}) and (\ref{eq2}) imply that  

\begin{equation*}
\{i\in A: p(\Omega_i)\leq^{*}\alpha\}\in{\cal L}(\alpha)
\end{equation*}
and that for each $\beta \in \Omega_{\omega}$ with $\beta<^{*}\alpha$,
\begin{equation*} 
\{i\in A: p(\Omega_i)\leq^{*}\beta\}\notin{\cal L}(\beta).
\end{equation*}

\noindent Thus, by Definition \ref{voterscheme}, $\omega(p(R))=\alpha$.
\end{itemize}

\noindent If $\alpha\in \Omega_f$, then $f^{T2}(R)=\omega(p(R))=\alpha=\med(p(R))=f^{T1}(R)$. Otherwise, we go to the second step.\medskip

\noindent \emph{\underline{Second step}}\smallskip

\begin{itemize}
\item We first show how, given a set of double-quotas, a collection of type-anonymous left-decisive sets can be constructed.\smallskip

\noindent Let $\mathbf{p}\in\Omega_f^{a}$ with $\omega(\mathbf{p})\in \Omega_{\omega}\cap\Omega_{f}^{C^{2}}$ and a set of minimal double-quotas $\{q_{\omega(\mathbf{p})}\}_{l=1}^{\bar{l}}$ as in Definition \ref{doublequota}. Then, a collection of type-anonymous left-decisive sets $W$ is constructed as follows: For each $\omega(\mathbf{p})\in \Omega_{\omega}\cap\Omega_{f}^{C^{2}}$, $W(\omega(\mathbf{p}))$ includes all coalitions formed by the number of agents of the double-quotas defined for $\omega(\mathbf{p})$. Formally, $W(\omega(\mathbf{p}))=\{S\subseteq N: (|S\cap A|,|S\cap D|)\equiv(q^{A}_{\omega(\mathbf{p})},q^{D}_{\omega(\mathbf{p})})^{l}$ for each $l\in\{1,\ldots,\bar{l}\}\}$. It can be easily checked that, by construction, $W$ satisfies all conditions in Definition \ref{winningdef}, and hence it is a collection of type-anonymous left-decisive sets.

\item We now show that, given a profile $R\in\mathcal{R}$, the outcome of $t_{\med(p(R))}$ for the given set of double-quotas coincides with the outcome of $g_{\omega(p(R))}$ for the collection of type-anonymous left-decisive sets constructed.\smallskip

\noindent Let $R\in\mathcal{R}$ such that $\med(p(R))\in \Omega_{f}^{C^{2}}$, and assume w.l.o.g. that $f^{T1}(R)=\underline{\med}(p(R))$. Then $t_{\med(p(R))}=\underline{\med}(p(R))$ and by Definition \ref{doublequotamethod}, there is a double quota $q_{\med(p(R))}=(q^{A}_{\med(p(R))},q^{D}_{\med(p(R))})$ such that $(|L_{\med(p(R))}^{A}(R)|, |L_{\med(p(R))}^{D}(R)|)\geqq (q^{A}_{\med(p(R))},q^{D}_{\med(p(R))})$. By construction, for each $S\subseteq N$ such that [$|S\cap A|=q^{A}_{\med(p(R))}$ and $|S\cap D|=q^{D}_{\med(p(R))}$], $S\in W(\omega(p(R))).$ Then, for some $S\in W(\omega(p(R))$, we have $L_{\omega(p(R))}(R)\supseteq S$. Hence, by Definition \ref{votingdef}, $g(R)=\underline{\omega}(p(R))$ and $f^{T2}(R)=\underline{\omega}(p(R))=\underline{\med}(p(R))=f^{T1}(R)$.
\end{itemize}

\noindent $\Leftarrow)$ \emph{Given a rule $f$ as in Theorem 2 $(iii)$, we show that it can be described as in Theorem 1 $(iii)$.}\smallskip

\noindent \emph{\underline{First step}}\smallskip
\begin{itemize}
\item We first show how, given a type-anonymous left coalition system, a collection of fixed locations can be defined.\smallskip

\noindent Let $f$ be a rule, with range $\Omega_f$, and its associated type-anonymous generalized median voter function $\omega$, with range $\Omega_{\omega}$ satisfying the conditions in Proposition 3, and corresponding type-anonymous left coalition system ${\cal L}$ as in Definition \ref{leftsystem}. For each $\alpha\in \Omega_{\omega}$, let $\underline{S}_{\alpha}$ denote the smallest size of the coalitions in ${\cal L}(\alpha)$. Then, starting from the left, the fixed locations $\gamma_f^{1},\ldots,\gamma_f^{a+1}$ are located on some elements of $\Omega_{\omega}$ as follows: $M^{\min \Omega_{\omega}}=(a+1)- \underline{S}_{\min \Omega_{\omega}}$; for each $\alpha\in \Omega_{\omega}\setminus\{\min \Omega_{\omega}\}$ such that $\underline{S}_{\alpha}\neq \underline{S}_{\beta}$ for each $\beta<^{*}\alpha$, $M^{\alpha}=\min\limits_{\beta<^{*}\alpha}\underline{S}_{\beta}- \underline{S}_{\alpha}$. It can be easily checked that, by construction, $\Omega_{\med}=\Omega_{\omega}$ and the fixed locations $\gamma_f^{1},\ldots,\gamma_f^{a+1}$ satisfy all conditions in Definition \ref{mixedmedian}.In addition, by construction, it happens that for each $\alpha\in \Omega_{\omega}$, $\underline{M}^{\alpha}+\underline{S}^{\alpha}=a+1$.\medskip

\item We now show that, given a profile $R\in\mathcal{R}$, the outcome of $\omega(p(R))$ for the given type-anonymous left coalition system coincides with the outcome of $\med(p(R))$ for the collection of fixed locations defined.\smallskip

\noindent Let $R\in{\cal R}$ such that $\omega(p(R))=\alpha$. By Definition \ref{voterscheme}, $\omega(p(R))=\alpha$ if and only if $ \{i \in A: p(\Omega_i) \leq^* \alpha\} \in \mathcal{L}(\alpha)$ and for each $\beta \in \Omega_{\omega}$ such that $\beta <^* \alpha$, $\{i \in A: p(\Omega_i) \leq^* \beta\}\notin \mathcal{L}(\beta)$. Then, we have that $|\{i \in A: p(\Omega_i) \leq^* \alpha\}|\geq \underline{S}_{\alpha}$ and $|\{i \in A: p(\Omega_i) \leq^* \beta\}|< \underline{S}_{\beta}\geq \underline{S}_{\alpha}$.\smallskip

\noindent By construction, $\underline{M}^{\alpha}=(a+1)-\underline{S}_{\alpha}$ and $\underline{M}^{\beta}=(a+1)-\underline{S}_{\beta}$. Therefore, 
\begin{equation}\label{eq3}
|\{i \in A: p(\Omega_i) \leq^* \alpha\}|+ \underline{M}^{\alpha}\geq \underline{S}_{\alpha}+(a+1)-\underline{S}_{\alpha}=(a+1)
\end{equation}

and for each $\beta \in \Omega_{\omega}$ such that $\beta <^* \alpha$,
\begin{equation}\label{eq4} 
|\{i \in A: p(\Omega_i) \leq^* \beta\}|+ \underline{M}^{\beta}< \underline{S}_{\beta} + (a+1)- \underline{S}_{\beta}=(a+1).
\end{equation}

\noindent Hence, inequalities (\ref{eq3}) and (\ref{eq4}) together imply that $\med(p(R))=\alpha$. If $\alpha\in \Omega_f$, $f^{T1}(R)=\med(p(R))=\alpha=\omega(p(R))=f^{T2}(R)$. Otherwise, we go to the second step.
\end{itemize}

\noindent \emph{\underline{Second step}}\smallskip
\begin{itemize}
\item We first show how, given a collection of type-anonymous left-decisive sets, a collection of double-quotas can be constructed.\smallskip

\noindent Let $\mathbf{p}\in\Omega_f^{a}$ with $\omega(\mathbf{p})\in \Omega_{ \omega}\cup\Omega_{f}^{C^{2}}$ and $W$ be a collection of type-anonymous left-decisive sets as in Definition \ref{winningdef}. Then, for each $S, S'\in W(\omega(\mathbf{p}))$ such that $(|S\cap A|,|S\cap D|)\neq (|S'\cap A|,|S'\cap D|)$, we define two double-quotas $(q^{A}_{\omega(\mathbf{p})},q^{D}_{\omega(\mathbf{p})})^{1}\equiv(|S\cap A|,|S\cap D|)$ and $(q^{A}_{\omega(\mathbf{p})},q^{D}_{\omega(\mathbf{p})})^{2}=(|S'\cap A|,|S'\cap D|)$.\footnote{Note that the defined double-quotas are all different by the requirement of minimality imposed in the type-anonymous left-decisive sets.} Since $W(\omega(\mathbf{p}))$ is formed by minimal coalitions, the defined double-quotas are also minimal. It can be easily checked that, by construction, the defined double-quotas satisfy all conditions in Definition \ref{doublequota}.

\item We now show that, given a profile $R\in\mathcal{R}$, the outcome of $g_{\omega(p(R))}$ for the given collection of type-anonymous left-decisive sets coincides with the outcome of $t_{\med(p(R))}$ for the collection of double-quotas constructed.\smallskip

\noindent Let $R\in\mathcal{R}$ such that $\omega(p(R))\in \Omega_{f}^{C^{2}}$, and assume w.l.o.g. that $f^{T2}(R)=\underline{\omega}(p(R))$. Then $g_{\omega(p(R))}(R)=\underline{\omega}(p(R))$ and by Definition \ref{votingdef}, there is a coalition $S\in W(\omega(p(R)))$ such that $L_{\omega(p(R))}(R)\supseteq S$. By construction, there is a double-quota $q_{\med(p(R))}$ such that $q^{A}_{\med(p(R))}=|S\cap A|$ and $q^{D}_{\med(p(R))}=|S\cap D|$. Hence, $(|L^{A}_{\omega(p(R))}(R)|, |L^{D}_{\omega(p(R))}(R)|)\geqq (q^{A}_{\med(p(R))},q^{D}_{\med(p(R))})$ and by Definition \ref{doublequotamethod}, $t_{\med(p(R))}(R)=\underline{\med}(p(R))$ and $f^{T1}(R)=\underline{\med}(p(R))=\underline{\omega}(p(R))=f^{T2}(R)$.\medskip
\end{itemize}

\noindent The equivalence can be also checked with Examples~\ref{ex1} and \ref{ex3} for the first step and with Examples~\ref{ex2} and \ref{ex4} for the second one.\medskip

\noindent Finally, Theorem~\ref{theorem1} is proven by this equivalence.

\section{Concluding Remarks}\label{sec6}

\noindent This paper characterizes all strategy-proof and type-anonymous rules on a domain of single-peaked and single-dipped preferences where the type of preference of each agent is known but the location of the peak or dip and the rest of the preference are private information. The first characterization generalizes existing results on the single-peaked preference domain and the case of two alternatives. This result unfolds in two steps as follows: In the first step, we compute the median between the peaks and a fixed collection of locations, which can be single alternatives or pairs of contiguous alternatives. If the outcome of the median is a single alternative, then that is the final outcome of the rule. Otherwise, in the second step, we choose between the two alternatives of the pair using a double-quota majority method. While \cite{moulin1980strategy} previously characterized the strategy-proof and anonymous rules in the single-peaked preference domain using a median, we impose additional restrictions on the feasible fixed locations. Furthermore, \cite{moulin1983strategy} characterized strategy-proof and anonymous rules in the two-alternative case as ``quota majority methods." In contrast, we define a double-quota method to account for a society partitioned into two different types of preferences .\medskip

\noindent We also present a second characterization based on the findings in \cite{alcalde2024strategy}, which characterized all strategy-proof rules on the same domain. Building on their results, we analyze the additional restrictions imposed by type-anonymity. We find that the key distinction lies in the number of agents of each type supporting the outcomes at each step, rather than the specific composition of these ``supportive" coalitions. We finally provide an equivalence of the two characterizations that is also illustrated by examples throughout the paper.\medskip

\noindent This model does not accommodate indifferences in agents' preferences. The literature has explored rules that allow for indifferences in both steps. In the domain of single-peaked preferences, insights provided by \cite{moulin1980strategy} remain applicable, as indifferences do not affect agents' peaks. Strategy-proof and anonymous rules that allow for indifferences in the case of two alternatives have been examined in \cite{lahiri2020strategy,basile2020binary, basile2021structure,basile2022anonymous}. The latter characterized these rules as ``extended quota majority methods". However, introducing indifferences into our model may lead to difficulties.  Since we consider the peaks of the agents on the set $\Omega_f$, instead of on the set $X$, allowing indifferences could make agents be indifferent between two alternatives in $\Omega_f$. Consequently, single-peaked and single-dipped preferences on $X$ would become single-plateau \citep{moulin1984generalized, berga1998strategy} and single-basined \citep{bossert2014single} preferences on $\Omega_f$, respectively. In such situations, it becomes unclear whether the same two-step procedure remains viable. Addressing this challenge constitutes a promising approach for further research.

\bibliographystyle{ecta}
\bibliography{anonimidad}

\end{document}